\numberwithin{equation}{section}
\def\ca{{\mathcal A}}
\def\cb{{\mathcal B}}
\def\cd{{\mathcal D}}
\def\ce{{\mathcal E}}
\def\cf{{\mathcal F}}
\def\ch{{\mathcal H}}
\def\cn{{\mathcal N}}
\def\cs{{\mathcal S}}
\def\cx{{\mathcal X}}
\def\ga{{\mathfrak A}} \def\gpa{{\mathfrak a}}
\def\gb{{\mathfrak B}}
\def\gps{{\mathfrak s}}
 \def\gpt{{\mathfrak t}}
\def\gz{{\mathfrak Z}}
\def\bc{{\mathbb C}}
\def\bm{{\mathbb M}}
\def\bn{{\mathbb N}}
\def\br{{\mathbb R}}
\def\bz{{\mathbb Z}}
\def\a{\alpha}
\def\b{\beta}
\def\g{\gamma}  \def\G{\Gamma}
\def\d{\delta}  \def\D{\Delta}
\def\eeps{\epsilon}
\def\eps{\varepsilon}
\def\l{\lambda} \def\L{\Lambda}
\def\m{\mu}
\def\n{\nu}
\def\s{\sigma} \def\S{\Sigma}
\def\t{\tau}
\def\f{\varphi} \def\ff{\phi} \def\F{\Phi}
  \def\Th{\Theta}
\def\om{\omega} \def\Om{\Omega}
\def\id{\hbox{id}}
\newtheorem{thm}{Theorem}[section]
\newtheorem{lem}[thm]{Lemma}
\newtheorem{cor}[thm]{Corollary}
\newtheorem{prop}[thm]{Proposition}
\newtheorem{defin}[thm]{Definition}
\def\aut{\mathop{\rm Aut}}
\def\carf{\mathop{\rm CAR}}
\def\e{\mathop{\rm Exp}\,}
\def\esssup{\mathop{\rm esssup}}
\def\sp{\mathop{\rm sp}}
\def\supp{\mathop{\rm supp}}
\def\dist{\mathop{\rm dist}}
\newcommand{\ty}[1]{\mathop{\rm {#1}}}
\def\di{\mathop{\rm d}\!}
\def\vol{\mathop{\rm vol}}
\def\ad{\mathop{\rm ad}}
\def\idd{{1}\!\!{\rm I}}
\newcommand{\nn}{\nonumber}
\begin{document}

\title[disordered fermions]
{disordered fermions on lattices and their spectral properties}
\author{Stephen Dias Barreto}
\address{Stephen Dias Barreto\\
Department of Mathematics \\
Padre Conceicao College of Engineering\\
Verna Goa 403 722, India} \email{{\tt sbarreto@pccegoa.org}}
\author{Francesco Fidaleo}
\address{Francesco Fidaleo\\
Dipartimento di Matematica \\
Universit\`{a} di Roma Tor Vergata\\
Via della Ricerca Scientifica 1, Roma 00133, Italy} \email{{\tt
fidaleo@mat.uniroma2.it}}
\date{\today}

\begin{abstract}
We study Fermionic systems on a lattice with random interactions
through their dynamics and the associated KMS states. These systems
require a more complex approach  compared with the standard spin
systems on a lattice, on account of  the difference in commutation
rules for the local algebras for disjoint regions, between these two
systems. It is for this reason that some of the known formulations
and proofs in the case of the spin lattice systems with random
interactions do not automatically go over to the case of disordered
Fermion lattice systems.
%As in the previous case whose the observable algebra is made of the tensor product of infinitely many copies of full matrix algebras,
We  extend to the disordered CAR algebra, some standard results
concerning the spectral properties exhibited by temperature states
for  disordered quantum spin systems. We discuss the Arveson
spectrum and its connection with the Connes and Borchers
$\G$--invariants for such $W^{*}$--dynamical systems. In the case of
KMS states exhibiting a natural property of invariance with respect
to the spatial translations, some interesting properties, associated
with standard spin--glass--like  behaviour, emerge naturally. It
covers infinite--volume limits of finite--volume Gibbs states, that
is the quenched disorder for Fermions living on a standard lattice
$\bz^d$. In particular, we show that a temperature state of the
systems under consideration can generate only a type $\ty{III}$ von
Neumann algebra (with the type $\ty{III_{0}}$ component excluded).
Moreover, in the case of the pure thermodynamic phase, the
associated von Neumann is of type $\ty{III_{\l}}$ for some
$\l\in(0,1]$, independent of the disorder. Such a result is in
accordance with the principle of self--averaging which affirms that
the physically relevant quantities do not depend on the disorder.
The present approach can be viewed as a further step towards  fully
understanding the very complicated structure of the set of
temperature states of quantum spin glasses, and its connection with
the breakdown of the symmetry for the replicas.
\vskip 0.3cm \noindent
{\bf Mathematics Subject Classification}: 46L55, 82B44, 46L35.\\
{\bf Key words}: Non commutative dynamical systems; Disordered
systems; Classification of $C^{*}$--algebras, factors.
\end{abstract}

\maketitle

\section{introduction}
\label{sec1}

Interacting Fermion systems on a lattice have usually been studied by considering a spinless
Fermions at each lattice site which interact with each other.
The restriction to spinless particles is just a  matter of
simplification of notation and more general situations can be treated as well. Investigations concerning the existence
of dynamics have been made in the past  and, more recently, the
equilibrium statistical mechanics of such systems including the
thermodynamic limits have  been studied. We refer the reader to
\cite{AM1, AM2, M, M1} and the literature cited therein, for a systematic treatment of the topic.

An example of a Fermion lattice system is the Hubbard model (see
e.g. \cite{Tk, WS}) which describes electrons in a solid,
interacting with each other through a repulsive Coulomb force. The
spinless counterpart was considered in \cite{M2}, and some
particular case of its disordered version is analyzed in some detail
in Section \ref{ccoomoo} of the present paper. Other (non disordered) models based on
Fermions, and connected with the Quantum Markov Property and the
Quantum Information Theory, are considered in \cite{AFM, F1}.
Importantly, in the majority of the cases the interaction potential
is assumed to be even. Of course, there are situations wherein the
potential considered is non even (cf. \cite{A2}), but it is yet
unclear if non even interactions have relevant physical
applications.

Another very important line of research in Statistical Mechanics is
that involving the so called {\it spin glasses}, falling into the
more general category of disordered systems. The first model,
constucted on tensor product of copies of a single algebra of
observables localized on the sites in a lattice, is the
Sherrington--Kirkpatrick model (cf. \cite{SK}). It is a disordered
mean field model for which it is meaningless to define the dynamics in the thermodynamic limit.
A more realistic model is the so called Edwards--Anderson model (cf.
\cite{EA}), which can be considered as a disordered generalization of the Ising or ferrimagnetic
model, provided the distribution of the coupling constants is one--sides.\footnote{The most interesting situation, corresponding to a spin glass, is when the distribution of the coupling constants is one--sides.}
The investigation of the {\it quenched disorder} for spin
glasses is a fairly formidable task. Among the problems which are
still open, we mention the {\it the breakdown of the symmetry for
replicas}. For the convenience of the reader, we report the
references \cite{B1, EH1, EH, Gu, HP1, HP2, MPV, Ne, NeS1, Ta, Y}
which are just a sample (far from being  complete), of some of the
work done  on the theory of the spin glasses.\footnote{A very nice
explanation of the failure of the Replica Method Solution, in terms of the
Moment Problem is given in \cite{Tn}.}

It is then natural to undertake the study of the disordered
systems for models which include  Fermions. For models without
Fermions, the study of such disordered systems was firstly carried out in
\cite{K}, using the standard techniques  of Operator Algebras. Apart
from the general properties of such disordered systems established
in that paper, the problem of the so called {\it weak Gibbsianess},
that is the appearance of weaky Gibbsian states which are not
jointly Gibbsian with respect to the observable variables and the
coupling constants taken together, is well explained. The reader is
referred to \cite{EMSS, Ku} for some concrete example on weak
Gibbsianess relative to the classical case. In \cite{B, BF}, general
properties of temperature (i.e. KMS) states, and their spectral
properties were studied in detail. Finally, in \cite{F}, the theory
of chemical potential is extended to such disordered systems. The
reader is referred also to \cite{A1, BF1} for a good review on the
topic.

In our model we consider Fermions on a lattice with random even
interactions between the spinless particles located at the lattice
sites. It is expected that the spectra of the random evolution group
of this infinite Fermion system will exhibit some invariance
properties. Besides, the invariant KMS states are also expected to
enjoy some nice structural properties. Because of the complex random
structure and the (anti)commutation properties of local algebras,
the analysis of such a  system is a fairly daunting task. Some of
the known formulations and proofs for disorderd spin lattice systems
do not automatically go over to the disorderd Fermion lattice
systems. 

By using a standard procedure (cf. e.g. \cite{BES}), we
start with an appropriate $C^{*}$--algebra of observables, that is
$\ga:=\ca\otimes L^{\infty}(\Om,\m)$. In order to encode the
Fermions, we consider  a separable unital
$C^*$--algebra $\ca$, equipped with a $\bz_2$--grading. In
particular, in some concrete examples $\ca$ will be the CAR algebra $\carf(\bz^d)$ on the
lattice $\bz^d$. In order to take into account the disorder, the
probability space $(\Om,\m)$ is the sample space for the coupling
constants, the latter being random variables on it. For such a
disordered system, the lattice translations and the time evolution
act in a natural way as mutually commuting group actions. The
resulting systems fall into the category of so called graded
asymptotically Abelian systems. Due to the grading, the study of the
spectral properties of such systems is  more involved than that of
the asymptotically Abelian ones. 

After investigating the general
properties of the disordered systems (sections \ref{sec2},
\ref{gmlc}, \ref{sssttt}), in Section \ref{asgab} we generalize some spectral properties, known for
asymptotically Abelian systems, to
the $\bz_2$--graded models under considerations. Section \ref{dsaspt} is devoted to apply such spectral results obtained in the graded situation, to the study of the structure of the von Neumann algebras generated by $\bz_2$--graded asymptotically Abelian dynamical systems.
Then we are able to investigate the type of the von Neumann algebras arising from temperature states of the Fermionic systems under consideration. The reader is referred to 
\cite{A, BF, D, HL, L, Se} for the analogous results known for asymptotically Abelian dynamical systems. 

The first general result we are able to prove is that a von Neumann
algebra with a nontrival infinite semifinite summand cannot carry an
action which is graded asymptotically Abelian w.r.t. the strong
topology. Namely, we generalize the corresponding result known for
asymptotically Abelian systems. We then pass on to the investigation
of the Arveson spectrum and its connection with the Connes and
Borchers $\G$--invariants for the (non factor) $W^{*}$--dynamical
systems equipped with a $\bz_2$--grading. We apply such results to
$W^{*}$--dynamical systems generated by the GNS representation of
temperature states exhibiting natural equivariance properties with
respect to the spatial translations and the time evolution. Then
some interesting properties, associated with  the standard
spin--glass--like behaviour, emerge naturally. The analysis covers the
case of KMS states obtained by infinite volume limits of
finite--volume Gibbs states, that is the quenched disorder for
Fermions living on a standard lattice $\bz^d$. We mention the fact
that a temperature state of such disordered Fermions can generate
only a type $\ty{III}$ von Neumann algebra, with the type
$\ty{III_{0}}$ component excluded.

As explained in \cite{BF}, the natural candidate for  the pure
thermodynamic phase is when the center
$\pi_\f(\ga)''\bigwedge\pi_\f(\ga)'$ of the GNS representation of a
KMS state $\f$, is "as  trivial as possible", that is
$$
\pi_\f(\ga)''\bigwedge\pi_\f(\ga)'\sim L^{\infty}(\Om,\m)\,.
$$ 
Even
for disordered systems incuding Fermions, a consequence of the
previously metioned result is that for a pure thermodynamic phase $\f$, $\pi_\f$ generates a
$\ty{III_{\l}}$ von Neumann algebra, for some $\l\in(0,1]$,
independent of the disorder. Namely, for the pure thermodynamic
phase of the disordered models under consideration, $\pi_\f(\ga)\sim
M\overline{\otimes}L^{\infty}(\Om,\m)$, where $M$ is the unique type
$\ty{III_{\l}}$ hyperfinite von Neumann factor. Such a result is in
accordance with the principle of self--averaging which affirms that
the physically relevant quantities do not depend on the
disorder. For a nice explanation on the study of the
spectral properties and the connected investigation of the type of
the factors appearing in Quantum Statistical Mechanics, the reader
is referred to the review paper \cite{Ka} and the literature cited
therein. We cite also the paper \cite{L1} where an interesting connection between the modular structure of the algebras of the observables and the statistics of the black holes is established.

The paper ends with a section devoted to the detailed analysis of a concrete model
based on a kind of disordered spinless Fermions, which reduces itself to the disordered Hubbard Hamiltonian, provided the distribution of the coupling constants is one--sides.

The symmetry replica breaking is one of the most important open problems in the theory of the spin glasses. As our approach is naturally based on the replicas, one for each value of the coupling constants, we hope that the approach followed in the present paper, as well as
that in the previous connected works \cite{B, BF, F, K},
can be viewed as a significant step towards  fully understanding
the very complicated structure of the set of temperature states of
quantum spin glasses, and its connection with the breakdown of the
symmetry for  replicas.

\section{the description of the model}
\label{sec2}

In the present paper we deal only with von Neumann algebras with
separable preduals unless specified otherwise. Besides, all
representations of the involved $C^{*}$--algebras are understood to
act on separable Hilbert spaces. Finally, all our $C^*$--algebras
have the identity $\idd$. Denote by $[a,b]:=ab-ba$,
$\{a,b\}:=ab+ba$, the commutator and anticommutator between elements
$a$, $b$, respectively.

We start by quickly reviewing the basic properties of the Fermion $C^{*}$-algebra $\carf(\bz^d)$ on a lattice
$\bz^{d}$.
%It is generated by the annihilation and creation operators satisfying the Canonical
%Anticommutation Relations.
Indeed, let $J$ be any set. The {\it Canonical Anticommutation Relations} (CAR for short) algebra
over  $J$ is the $C^{*}$--algebra $\carf(J)$ with the identity $\idd$
generated by the set $\{a_j, a^{\dagger}_j\}_{j\in I}$ (i.e. the Fermi annihilators and creators respectively), and the relations
\begin{equation*}
(a_{j})^{*}=a^{\dagger}_{j}\,,\,\,\{a^{\dagger}_{j},a_{k}\}=\d_{jk}\idd\,,\,\,
\{a_{j},a_{k}\}=\{a^{\dagger}_{j},a^{\dagger}_{k}\}=0\,,\,\,j,k\in J\,.
\end{equation*}
On the CAR algebra the parity automorphism $\Th$ acts on the generators as
$$
\Th(a_{j})=-a_{j}\,,\,\,\Th(a^{\dagger}_{j})=-a^{\dagger}_{j}\,,\quad j\in J\,,
$$
and induces on $\carf(J)$ a $\bz_{2}$--grading. This grading yields,
$\carf(J)=\carf(J)_{+} \oplus\carf(J)_{-}$ where
\begin{align*}
&\carf(J)_{+}:=\{a\in\carf(J) \ | \ \Th(a)=a\}\,,\\
&\carf(J)_{-}:=\{a\in\carf(J) \ | \ \Th(a)=-a\}\,.
\end{align*}
Elements  in $\carf(J)_+$ and in $\carf(J)_-$ are called
{\it even} and {\it odd}, respectively.

A map $T:\ca_1\to\ca_2$ between $C^*$--algebras with $\bz_{2}$--gradings
$\Th_1$, $\Th_2$ is said to be
{\it even} if it
is grading--equivariant:
$$
T\circ\Th_1=\Th_2\circ T\,.
$$
The previous definition applied to states $\f\in\cs(\carf(J))$ leads to $\f\circ\Th=\f$, that is $\f$ is even if it is
$\Th$--invariant.

Let the index set $J$ be countable, then the CAR algebra  is isomorphic to the
$C^{*}$--infinite
tensor product of $J$--copies of $\bm_{2}(\bc)$:
\begin{equation}
\label{jkw}
\carf(J)\sim\overline{\bigotimes_{J}\bm_{2}(\bc)}^{C^*}\,.
\end{equation}
Such an isomorphism is established by the Jordan--Klein--Wigner
transformation, see e.g. \cite{T3}, Exercise XIV. When $J=\bz^d$,
the above mentioned isomorphism does not preserve the canonical
local properties of the CAR algebra, thus it cannot be used to
investigate the local properties of the model.

Thanks to \eqref{jkw}, $\carf(J)$ has a unique tracial state $\t$
%(at least when $J$ is countable)
as the extension of the unique tracial state on $\carf(I)$,
$|I|<+\infty$. Let $J_1\subset J$ be a finite set and
$\f\in\cs(\carf(J))$. Then there exists a unique positive element
$T\in\carf(J_1)$ such that
$\f\lceil_{\carf(J_1)}=\t\lceil_{\carf(J_1)}(T\,{\bf\cdot}\,)$. The
element $T$ is called the {\it adjusted matrix} of
$\f\lceil_{\carf(J_1)}$. For the standard applications to quantum
statistical mechanics, one  also uses the density matrix w.r.t. the
unnormalized trace.

Our aim is to investigate disordered models of Fermions on lattices.
Our starting point will be the Fermion algebra $\carf(\bz^d)$
together with a (formal) random Hamiltonian. We denote
$\carf(\L)\subset\carf(\bz^d)$ the local CAR subalgebra generated by
$\{a_j, a^{\dagger}_j\mid j\in\L\}$. We can then  consider a net
$\{H_{\L}(\om)\}_{\L\subset\bz^{d}}$, $\L$ being any finite subsets
of $\bz^{d}$, of local random Hamiltonians which are even with
respect to the parity automorphism $\Th$, which is made up of
$\carf(\bz^d)_{s.a.}$--valued measurable functions arising from
finite--range even interactions. The net under consideration
satisfies the equivariance condition
\begin{equation}
\label{6}
H_{\L+x}(\om)=\a_{x}(H_{\L}(T_{-x}\om))\,.
\end{equation}
Such a picture arises naturally in the study of disordered systems
(see e.g. \cite{BES, K}), and more precisely when one considers
Fermion systems with random, even Hamiltonians. A concrete model
arising from a random Hamiltonian as in \eqref{6} is described in some 
detail in Section \ref{ccoomoo}.

Associated with such a random Hamiltonian, there is a one parameter
group of random automorphisms $\t_{t}^{\om}$ of $\carf(\bz^d)$, one
for each choice of the coupling constants in the sample space
$\Om$.\footnote{The reader is referred to the seminal paper
\cite{AM1} concerning the statistical mechanics associated with (non
disordered) Fermions.} As described below, we assume that
$\t_{t}^{\om}$ enjoys good joint measurability conditions, and local
properties, see Section \ref{gmlc}.
%It is also natural to assume that $\t_{t}^{\om}$ acts locally on constant elements of the form
%$A\otimes\idd$, $A\in\carf(\bz^d)$. Namely, $\t_{t}^{\om}(A)\in\carf(\bz^d)\otimes L^{\infty}(\Om,
%\m)$, see again Section \ref{ccoomoo}.
%the function $f_{A,t}$ given by
%\begin{equation*}
%f_{A,t}(\om):=\t_{t}^{\om}(A)\,.
%\end{equation*}
In view of the possible applications to more general situations
including disordered gauge theories on the lattices and/or
disordered theories arising from quantum field theory, our main
object will be merely a unital $\bz_2$--graded separable
$C^*$--algebra $\ca$. In Section \ref{ccoomoo} we specialize to the
case of a concrete model for which $\ca=\carf(\bz^d)$.

\section{the disordered algebra of the observables}
\label{gmlc}

Taking the cue from the concrete model described in the Section
\ref{ccoomoo}, in order to describe disordered models including
Fermions on lattices we list all our assumptions.

We start with a separable $C^{*}$--algebra $\ca$ with an identity
$\idd$, describing the physical observables/fields.\footnote{In the
case of gauge theories,  $\ca$ is obtained as the fixed--point
algebra $\ca=\cf^{G}$ under a pointwise action $\g:g\in
G\mapsto\g_{g}\in\aut(\cf)$ of a field group $G$ (the {\it gauge
group of $1^{st}$ kind}) on another separable $C^{*}$--algebra $\cf$
(the {\it field algebra}). Due to the univalence superselection rule
(cf. \cite{SW}), when we deal with the CAR algebra $\carf(J)$, the
gauge group is precisely $\bz_2$ and the observable algebra is the
even part $\carf(J)_+$. For general theories which include Fermionic
systems, the gauge group includes $\bz_2$, and the even part of the
field algebra is the invariant part under the action of the
generator $\s\in\bz_2\subset G$.} 

We suppose that the spatial translations $\bz^d$ acts in a natural
way on $\ca$ as a group of automorphisms
$\{\a_{x}\}_{x\in\bz^{d}}\subset\aut(\ca)$. 
In addition, we suppose that there exists an automorphism $\s\in\aut(\ca)$ whose square is the identity, commuting with the spatial translations,
\begin{equation}
\label{parsi}
\s^2=\id\,;\quad \s\a_x=a_x\s\,,\quad x\in\bz^d\,.
\end{equation}
In the concrete situation when $\ca=\carf(\bz^d)$, $\a_x$ is the
shift on the lattice of the creators and annihilators by an amount
$x\in\bz^d$, whereas $\s$ is nothing but the parity automorphism
$\Th$. We put
\begin{equation}
\label{fesi}
\ca_{+}:=\frac{1}{2}(e+\s)(\ca)\,,\quad
\ca_{-}:=\frac{1}{2}(e-\s)(\ca)\,.
\end{equation}
Denote by
\begin{equation}
\label{aafff}
\left\{A,B\right\}_\eeps:=AB-\eeps_{A,B}BA
\end{equation}
the {\it graded commutator}, where $\eeps_{A,B}=-1$ if
$A,B\in\ca_{-}$ and $\eeps_{A,B}=1$ in the case of the three
remaining possibilities.

We say that the
$C^{*}$--algebra $\ca$ is {\it graded asypmtotically Abelian} w.r.t.
$\a$, if for each $A,B\in\ca_\pm$,
\begin{equation*}
%\label{aaff}
\lim_{|x|\to+\infty}
\big\{\gpa_{x}(A),B\big\}_\eeps=0\,,
\end{equation*}

In order to introduce the disorder, we consider a standard measure
space $(\Om,\m)$ based on a compact separable space $\Om$, and a
Borel probability measure $\m$. The group $\bz^{d}$ of the spatial
translations is supposed to act on the probability space $(\Om,\m)$
by measure preserving transformations $\{T_{x}\}_{x\in\bz^{d}}$.

A one parameter random group of automorphisms
\begin{equation*}
%\label{5}
(t,\om)\in\br\times\Om\mapsto\t^{\om}_{t}\in\aut(\ca)
\end{equation*}
is acting on $\ca$. It is by definition a representation of $\br$
for each fixed realization  $\om\in\Om$ of the coupling constants.
Furthermore, it is supposed to be jointly measurable in the
$\s$--strong topology. As a consequence of the
Banach--Kuratowski--Pettis Theorem (cf. \cite{Ke}, pag. 211), for
each fixed value $\om\in\Om$, the one parameter group
$t\in\br\mapsto\t^{\om}_{t}\in\aut(\ca)$ is automatically continuous
in the $\s$--strong topology.

Consider, for
$A\in\ca$, the measurable function $f_{A,t}(\om):=\t^{\om}_{t}(A)$. We get
\begin{equation*}
\|f_{A,t}\|_{L^{\infty}(\Om,\m;\ca)}\equiv
\esssup_{\om\in\Om}\|\t_{t}^{\om}(A)\|_{\ca}
=\|A\|_{\ca}\,,
\end{equation*}
where the last equality follows as $\t^{\om}_{t}$ is isometric. As
in the concrete example $\ca=\carf(\bz^d)$ in Section \ref{sec2}, we
further assume that $\t$ acts locally. Namely, if $A\in\ca$, then
the function $f_{A,t}\in L^{\infty}(\Om,\m;\ca)$ belongs to the
$C^{*}$--subalgebra $\ca\otimes L^{\infty}(\Om,\m)$.\footnote{In the
present paper, $\ca\otimes L^{\infty}(\Om,\m)$ means the
$C^*$--algebra obtained by completing the algebraic tensor product
$\ca\odot L^{\infty}(\Om,\m)$ under any $C^*$--cross norm, as any
Abelian $C^{*}$--algebra is nuclear.}

Finally, we assume the following commutation rule
\begin{equation}
\label{4}
\t^{T_{x}\om}_{t}\a_{x}=\a_{x}\t^{\om}_{t}\,,\quad x\in\bz^{d}\,, \om\in\Om\,, t\in\br\,.
\end{equation}
By following the  approach of encoding the disorder in a bigger
algebra(cf. \cite{A1, B, BF, BF1, F, K}), the disordered system
under consideration is described by
$$
\ga:=\ca\otimes L^{\infty}(X,\n)\,.
$$
Notice that, by identifying $\ga$ with a closed subspace of
$L^{\infty}(X,\n;\ca)$, each element $A\in\ga$ is uniquely
represented by a measurable, essentially bounded function
$\om\mapsto A(\om)$ with values in $\ca$. In addition, $\ga$
contains copies $\ca\otimes\idd$ and $\idd\otimes
L^{\infty}(\Om,\m)$ of $\ca$ and $L^{\infty}(\Om,\m)$ respectively,
denoted also by $\ca$ and $L^{\infty}(\Om,\m)$, by an abuse of
notation.

The group $\bz^{d}$ of all the space translations acts naturally on
the $C^{*}$--algebra $\ga$ as
\begin{equation}
\label{trcomm}
\gpa_{x}(A)(\om):=\a_{x}(A(T_{-x}\om))\,, \quad A\in\ga\,,\om\in\Om\,, x\in\bz^d\,.
\end{equation}
Further, as the time translations are supposed to act locally,
\begin{equation*}
%\label{aztemp1}
\gpt_{t}(A)(\om):=\t_{t}^{\om}(A(\om))\,, \quad A\in\ga\,,\om\in\Om\,, t\in\br
\end{equation*}
is a well defined one parameter group of automorphisms of $\ga$, continuous in the
$\s$--strong topology.
In addition, put
\begin{equation*}
%\label{aztemp2}
\gps:=\s\otimes\id_{L^{\infty}(X,\n)}\,.
\end{equation*}
Then the subspaces
$\ga_{+}$ and $\ga_{-}$ are defined as in \eqref{fesi}.

On account of  \eqref{4} and \eqref{trcomm}, it is straightforward
to verify that $\{\gpa_{x}\}_{x\in\bz^{d}}$ and
$\{\gpt_{t}\}_{t\in\br}$ define actions of $\bz^{d}$ and $\br$
respectively on $\ga$ which are mutually commuting. Furthermore, by
\eqref{parsi}, $\gpa_{x}\gps=\gps\gpa_{x}$ for each $x\in\bz^d$.
Concerning the parity of the time translations, we assume that
$\gpt_{t}\gps=\gps\gpt_{t}$ as well. In the concrete cases under
consideration, the parity of the time translations will follow by
the fact that the time translations and the spatial translations are
mutually commuting. Indeed, we have
\begin{prop}
Suppose $\ca=\carf(\bz^d)$. Under all the previous assumptions except the parity for the time evolution, we get $\gpt_{t}\gps=\gps\gpt_{t}$.
\end{prop}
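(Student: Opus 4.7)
The plan is to work on the lifted algebra $\ga$ and show that $\gpt_t$ preserves both parity subspaces $\ga_\pm$, which is equivalent to $\gpt_t\gps=\gps\gpt_t$. A direct computation using \eqref{4} in the form $\tau_t^\om\circ\a_x=\a_x\circ\tau_t^{T_{-x}\om}$ gives
\[
\gpt_t\gpa_x(A)(\om)=\tau_t^\om(\a_x(A(T_{-x}\om)))=\a_x(\tau_t^{T_{-x}\om}(A(T_{-x}\om)))=\gpa_x\gpt_t(A)(\om),
\]
so that on $\ga$ the time and space translations actually commute. In addition, the graded asymptotic abelianness of $(\ca,\a)$ transfers to $(\ga,\gpa)$ by approximating $A,B\in\ga_\pm$ with elementary tensors of pure parity and using uniform boundedness.

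Fix $A$ of pure parity and set $C:=\gpt_t(A)=C_++C_-$. If $A\in\ga_+$, then $[\gpa_x(A),A^*]\to 0$ by graded asymptotic abelianness; applying the isometric automorphism $\gpt_t$ (and using $\gpt_t\gpa_x=\gpa_x\gpt_t$) yields $[\gpa_x(C),C^*]\to 0$. Expanding the commutator in the parity components of $C$, all cross-terms vanish in norm by graded asymptotic abelianness, while the odd--odd commutator contributes $[\gpa_x(C_-),C_-^*]=\{\gpa_x(C_-),C_-^*\}-2C_-^*\gpa_x(C_-)\to-2C_-^*\gpa_x(C_-)$, so $\|C_-^*\gpa_x(C_-)\|_\ga\to 0$. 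If instead $A\in\ga_-$, the analogous argument starting from $\{\gpa_x(A),A^*\}\to 0$ and applying $\gpt_t$ gives $\{\gpa_x(C),C^*\}\to 0$; projecting this identity onto $\ga_+$ via the continuous projection $E_+=\frac12(\id+\gps)$, the only surviving even contribution is $2C_+^*\gpa_x(C_+)$, whence $\|C_+^*\gpa_x(C_+)\|_\ga\to 0$.

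The core step is the vanishing criterion: I need to show that any $D\in\ga_\pm$ with $D^*\gpa_x(D)\to 0$ must vanish. Approximate $D$ in norm by local $D^{(n)}\in\carf(\L_n)\otimes L^\infty(\Om,\m)$ of the same parity. For $|x|$ large enough that $\L_n\cap(\L_n+x)=\emptyset$, the Jordan--Wigner (Klein) identification of $\carf(\L_n)\cdot\carf(\L_n+x)$ with the tensor product $\carf(\L_n)\otimes\carf(\L_n+x)$ of finite matrix algebras yields, fibre-wise,
\[
\|D^{(n)}(\om)^*\a_x(D^{(n)}(T_{-x}\om))\|_\ca=f_n(\om)\,f_n(T_{-x}\om),\qquad f_n(\om):=\|D^{(n)}(\om)\|_\ca,
\]
so that $\|(D^{(n)})^*\gpa_x(D^{(n)})\|_\ga=\esssup_\om f_n(\om)f_n(T_{-x}\om)$. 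If $M_n:=\|D^{(n)}\|_\ga>0$, then $E:=\{f_n>M_n/2\}$ has $\m(E)>0$, and Poincar\'e recurrence for the measure-preserving $\bz^d$-action produces a syndetic, and hence unbounded, set of $x$ with $\m(T_{-x}E\cap E)>0$; along this set the preceding esssup is $\ge M_n^2/4$, contradicting the vanishing hypothesis via the norm-approximation $D^{(n)}\to D$. Hence $D=0$, so $\gpt_t(\ga_\pm)\subseteq\ga_\pm$, and by bijectivity of $\gpt_t$ one concludes $\gpt_t\gps=\gps\gpt_t$. The hard part is precisely this last step: it combines the disjoint-region norm-factorisation peculiar to $\carf(\bz^d)$ (the only place the hypothesis $\ca=\carf(\bz^d)$ enters) with Poincar\'e recurrence for the translation action on $(\Om,\m)$.
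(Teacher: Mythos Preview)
Your argument is correct, but it takes a genuinely different route from the paper's. The paper characterizes the even elements of $\ga$ intrinsically as those $A$ for which $\|[\gpa_x(A),B]\|\to 0$ for every $B\in\ga$: to see that an element satisfying this must be even, one simply tests against the \emph{constant} odd unitary $B=a_{x_0}+a_{x_0}^\dagger$, so that the commutator reduces (after subtracting the graded commutators, which vanish) to $-2\gpa_x(A_-)B$, and unitarity of $B$ gives $\|A_-\|=\|\gpa_x(A_-)B\|\to 0$. Since $\gpt_t$ and $\gpa_x$ commute, this commutator criterion is manifestly $\gpt_t$--invariant, hence $\gpt_t$ preserves $\ga_+$ (and thus commutes with $\gps$), following the pattern of Proposition~8.1 in \cite{AM1}.

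By contrast, you avoid any external test element and instead exploit the self--(anti)commutator $[\gpa_x(A),A^*]$ or $\{\gpa_x(A),A^*\}$, pushing it through $\gpt_t$ and isolating the ``wrong--parity'' piece of $C=\gpt_t(A)$. The price is that your vanishing step --- $D^*\gpa_x(D)\to 0\Rightarrow D=0$ --- is no longer immediate: you must invoke the disjoint--region norm factorization $\|ab\|=\|a\|\,\|b\|$ for pure--parity local elements (correct, via the Klein transform) together with Poincar\'e recurrence on $(\Om,\m)$ to produce unboundedly many $x$ along which the esssup stays bounded below. The paper's choice of a constant odd unitary collapses this entire step to a one--line norm identity. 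Your approach does yield a mildly more general vanishing criterion, and it makes explicit the role of the measure--preserving dynamics on $\Om$, but for the present purpose the paper's argument is considerably shorter.
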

\begin{proof}
If $A,B\in\ga$ with $B(\om)=B$ a constant field, we see that
$\{\gpa_x(A),B\}_\eeps\to0$ when $|x|\to+\infty$. By reasoning as in
Lemma 8.2 of \cite{AM1}, we see that
\begin{equation}
\label{cencaz}
\lim_{|x|\to+\infty}\|[\gpa_{x}(A),B]\|=0
\end{equation}
for each $B\in\ga$, if and only if  $A$ is even. Indeed, let
$A\in\ga$ and $B$ a constant field made by the unitary
$U=a_{x_0}+a_{x_0}^{\dagger}$ for any choice of $x_0\in\bz^d$. We
get
\begin{align*}
&[\gpa_{x}(A),B]=[\gpa_{x}(A_+),B]+[\gpa_{x}(A_-),B]\\
=&[\gpa_{x}(A_+),B]+\{\gpa_x(A_-),B\}
-2\gpa_{x}(A_-)B\,.
\end{align*}
If \eqref{cencaz} holds true, then $\gpa_{x}(A_-)B\to0$ as the first two terms in the l.h.s. go to zero due to CAR.
 Thus $\|A_-\|=\|\gpa_{x}(A_-)\|=\|\gpa_{x}(A_-)B\|\to0$, that is $A=A_+$.
 The converse statement follows from the graded asymptotic
 Abelianness of the CAR algebra $\carf(\bz^d)$ w.r.t. the
 spatial translations. The proposition now follows by applying the
 reasoning in the proof of proposition 8.1 of \cite{AM1} to the time
 translations and spatial translations on the disorder algebra $\ga$.
\end{proof}

\section{states}
\label{sssttt}

Consider a state $\f\in\cs(\ga)$ which is invariant w.r.t. the
spatial translation $\gpa$. Let $(\ch_\f,\pi_\f, U_x, \F)$ be the
GNS covariant quadruple associated to $\f$.

Let $C,D\in\ga$ and $A,B\in\ga_\pm$. We say that the state $\f$ is
{\it graded asymptotically Abelian} w.r.t. $\gpa$ if
\begin{equation}
\label{aaf}
\lim_{|x|\to+\infty}
\f\left(C\big\{\gpa_{x}(A),B\}_\eeps D\right)=0\,,
\end{equation}
where $\{\,{\bf\cdot}\,,B\,{\bf\cdot}\,\}_\eeps$ is the graded commutator given in \eqref{aafff}

The state $\f$ is {\it weakly clustering} w.r.t. $\gpa$ if
\begin{equation}
\label{wcf}
\lim_{N}\frac{1}{|\L_{N}|}\sum_{x\in\L_{N}}
\f(A\gpa_{x}(B))=\f(A)\f(B)\,,
\end{equation}
$\L_{N}$ being the box with a vertex located at the origin,
containing $N^{d}$ points with positive coordinates.\footnote{For
continuous dynamical systems, one uses in \eqref{wcf} the natural
modification $M$ of the Cesaro mean given on bounded measurable
functions, given by
$$
M(f):=\lim_{D\to+\infty}\frac{1}{\vol(\L_{D})}\int_{\L_{D}}f(x)\di^{d}x\,,
$$
$\L_{D}$ being a box with edges of length $D$.} 

The state $\f$ is
{\it $\bz^d$--Abelian} if $E_{\f}\pi_{\f}(\ga)E_{\f}\subset\cb(\ch_\f)$ is
a family of mutually commuting operators, $E_{\f}$ being the
selfadjoint projection onto the invariant vectors for the action of
$U_x$. Furthermore, a state $\f\in\cs(\ga)$ is {\it even} if it is
$\s$--invariant. Denote by $\cs(\ga)_+$ the set of all the even
states.

We report the following result for the
sake of completeness.
\begin{prop}
\label{folk} 
Suppose that $\f\in\cs(\ga)$ is a $\gpa$--invariant,
graded asymptotically Abelian state. Then $\f\in\cs(\ga)_+$ and it
is $\bz^d$--Abelian. In addition, the following assertions are
equivalent.
\begin{itemize}
\item[(i)] $\f$ is  $\gpa$--weakly clustering,
\item[(ii)] $\f$ is  $\gpa$--ergodic.
\end{itemize}
\end{prop}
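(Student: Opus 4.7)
I would split the argument into three parts corresponding to the three claims, with the mean ergodic theorem for $\bz^d$ (strong convergence $|\L_N|^{-1}\sum_{x\in\L_N}U_x\to E_\f$) as the workhorse.

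\emph{Evenness.} For $A\in\ga_-$ the elements $A$ and $A^*$ are both odd, so graded asymptotic Abelianness with $C=D=\idd$ yields
\[
\f(\gpa_x(A)A^*)+\f(A^*\gpa_x(A))\to 0\quad\text{as }|x|\to+\infty.
\]
Rewriting each summand as a Hilbert-space inner product (for instance $\f(A^*\gpa_x(A))=\langle U_x\pi_\f(A)\F,\pi_\f(A)\F\rangle$) and averaging over $\L_N$, the mean ergodic theorem converts each limit into a squared norm. Two non-negative quantities summing to zero must both vanish, so $E_\f\pi_\f(A)\F=E_\f\pi_\f(A^*)\F=0$, whence $\f(A)=\langle\pi_\f(A)\F,E_\f\F\rangle=0$ and $\f\in\cs(\ga)_+$.

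\emph{$\bz^d$-Abelianness.} Evenness lets $\gps$ be implemented on $\ch_\f$ by a self-adjoint unitary $V$ with $V\F=\F$ commuting with every $U_x$; split $\ch_\f=\ch_\f^+\oplus\ch_\f^-$ accordingly. Since $\pi_\f(\ga_-)\F$ is total in $\ch_\f^-$, the previous step gives $E_\f\ch_\f^-=\{0\}$: every invariant vector is $V$-even. Then for $A_-\in\ga_-$ and $\xi,\eta\in E_\f\ch_\f$, conjugation by $V$ forces $\langle\pi_\f(A_-)\xi,\eta\rangle=0$, so $E_\f\pi_\f(A)E_\f=E_\f\pi_\f(A_+)E_\f$ for every $A\in\ga$. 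To commute two even sandwiches, for $A_+,B_+\in\ga_+$ and invariant $\xi,\eta$, the mean ergodic theorem gives
\[
\langle E_\f\pi_\f(A_+)E_\f\pi_\f(B_+)\xi,\eta\rangle=\lim_N\frac{1}{|\L_N|}\sum_{x\in\L_N}\langle\pi_\f(A_+\gpa_x(B_+))\xi,\eta\rangle,
\]
while the swapped expression, after using $U_{-x}$-invariance of $\xi,\eta$ to replace $B_+\gpa_x(A_+)$ by $\gpa_{-x}(B_+)A_+$ and substituting $y=-x$, becomes $\lim_N|\L_N|^{-1}\sum_y\langle\pi_\f(\gpa_y(B_+)A_+)\xi,\eta\rangle$; the difference is the Cesaro mean of $\langle\pi_\f(\{A_+,\gpa_y(B_+)\}_\eeps)\xi,\eta\rangle$, which vanishes by graded asymptotic Abelianness.

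\emph{Equivalence.} The mean ergodic theorem identifies $\lim_N|\L_N|^{-1}\sum_{x\in\L_N}\f(A\gpa_x(B))=\langle\pi_\f(A)E_\f\pi_\f(B)\F,\F\rangle$ for all $A,B\in\ga$. Assuming (ii), $E_\f=|\F\rangle\langle\F|$ and the right-hand side equals $\f(A)\f(B)$. Conversely, assuming (i), specialising to $A=B^*$ gives $\|E_\f\pi_\f(B)\F\|^2=|\f(B)|^2=|\langle E_\f\pi_\f(B)\F,\F\rangle|^2$, and the equality case of Cauchy--Schwarz forces $E_\f\pi_\f(B)\F=\f(B)\F$ for every $B\in\ga$; by cyclicity of $\F$, $E_\f$ has range $\bc\F$ and $\f$ is ergodic. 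The main obstacle is the $\bz^d$-Abelianness step, since graded asymptotic Abelianness directly controls only graded commutators, which for two odd elements are anticommutators rather than commutators; the refinement that all invariants are $V$-even is exactly what annihilates the odd-odd contribution in $E_\f\pi_\f(\cdot)E_\f$ and reduces the check to the even sector, where the graded and ordinary commutators coincide.
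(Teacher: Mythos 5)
Your proposal is correct and follows essentially the same route as the paper, whose proof consists of citing Example 5.2.21 and Proposition 5.4.23 of Bratteli--Robinson: the anticommutator-plus-mean-ergodic-theorem trick for evenness, the reduction of $E_\f\pi_\f(\cdot)E_\f$ to the even sector for $\bz^d$-Abelianness, and the rank-one-$E_\f$/Cauchy--Schwarz argument for the equivalence of clustering and ergodicity. You have simply written out in full the standard arguments the authors invoke by reference.
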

\begin{proof}
By reasoning as in Example 5.2.21 of \cite{BR2}, we conclude that
$\f$ is automatically even and $\bz^d$--Abelian. Concerning the last
assertion, it is a well--known fact that (i) always implies (ii).
The reverse implication follows as in Proposition 5.4.23 of
\cite{BR2}, the last working also under the weaker condition
\eqref{aaf}.
\end{proof}
In the present paper, the asymptotic Abelianess is always w.r.t. the spatial translations if it is not automatically specified.

Let $\pi$ be a representation of $\ga$. We easily get
$$
\pi(L^{\infty}(\Om,\m))\subset\gz_{\pi}\,.
$$
Suppose that $\pi$ is normal when restricted to
$L^{\infty}(\Om,\m)$. In such a situation, there
exists an essentially unique measurable set $E\subset\Om$, such that
\begin{equation*}
%\label{abcon}
\pi(L^{\infty}(\Om,\m))\sim L^{\infty}(\Om,\n)\,.
\end{equation*}
where $\n$ is nothing but the absolutely continuous measure w.r.t. $\m$ given, for each measurable set $F$, by
$$
\n(F)=\m(F\cap E)\,.
$$
We have also
$$
\pi(\ga)''=\pi(\ca\otimes C(\Om))''\,,
$$
that is $\pi(\ca\otimes C(\Om))$ is a weakly dense separable
$C^{*}$--subalgebra of $\pi(\ga)''$.

We can consider the subcentral decomposition of the restriction of
$\pi$ to the separable $C^{*}$--subalgebra $\ca\otimes C(\Om)$,
w.r.t. $\pi(L^{\infty}(\Om,\m))\equiv\pi(C(\Om))''$, see
\cite{T1}, Theorem IV 8.25. We obtain
\begin{equation}
\label{a1}
\pi=\int^{\oplus}_{\Om}\pi_{\om}\m(\di\om)
\end{equation}
on
$$
\ch_{\pi}=\int^{\oplus}_{\Om}\ch_{\om}\m(\di\om)\,.
$$
The measurable field $\{\pi_{\om}\}_{\om\in\Om}$ of representations
of $\ca\otimes C(\Om)$ is uniquely determined by its restriction to
$\ca$.\footnote{For $\om\in E^{c}$, the complement of the support
$E$ of $\n$, $\pi_{\om}$ will be the trivial representation on the
trivial Hilbert space $\ch_{\om}\equiv\{0\}$.} This follows from the
fact that for each $A\in\ca$ and $f\in L^{\infty}(\Om,\m)$, we have
\begin{equation}
\label{reaa}
\pi(A\otimes f)=\int^{\oplus}_{\Om}f(\om)\pi_{\om}(A\otimes
\idd)\di\m(\om)\,.
\end{equation}
Now by Lemma 8.4.1 of \cite{DX0}, we have
$$
M:=\pi(\ga)''=\int^{\oplus}_{\Om}M_{\om}\n(\di\om)\,,
$$
where for almost all $\om\in\Om$,
$$
M_{\om}=\pi_{\om}(\ca\otimes
L^{\infty}(\Om,\m))''\equiv\pi_{\om}(\ca\otimes\idd)''\,.
$$
As explained in \cite{BF, F}, in order to take into account the disorder, we deal with states
$\f\in\cs(\ga)$ which are normal when restricted to $L^{\infty}(\Om,\m)$ (i.e.
$\f(\idd\otimes f)=\int fg_\f\di\m$ for a uniquely determined $g_\f\in L^{1}(\Om,\m)$). Then
there exists
(\cite{T1}, Proposition IV.8.34) a $*$--weak measurable field
$\{\f_{\om}\}_{\om\in\Om}$ of positive forms on $\ca$ such that,
for each $A\in\ga$,
\begin{equation}
\label{a2}
\f(A)=\int_{\Om}\f_{\om}(A(\om))\m(\di\om)\,,
\end{equation}
the function $\om\mapsto A(\om)$ being the representative of $A$
in $L^{\infty}(\Om,\m;\ca)$.

Consider the GNS representation $\pi_{\f}$ relative to $\f$. It is
straightforward to check that, for almost all $\om\in\Om$,
$\pi_{\f_{\om}}$ is unitarily equivalent to the restriction of
$\pi_{\om}$ to $\ca\otimes\idd\sim\ca$, where $\pi_{\f_{\om}}$ is
the GNS representation of $\f_{\om}$, and $\pi_{\om}$ is the
representation occurring in the decomposition of $\pi_{\f}$ as given
in \eqref{a1}  .
\begin{prop}
\label{001ev} 
Let $\f\in\cs(\ga)$ be such that
$\f\lceil_{L^{\infty}(\Om,\m)}$ is normal. If it is invariant w.r.t.
$\gpa$, then for almost all $\om\in\Om$, the form $\f_{\om}$ in
\eqref{a2} is even.
\end{prop}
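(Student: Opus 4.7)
The plan is to reduce the pointwise evenness of $\f_\om$ to the global evenness of $\f$ itself, by invoking the essentially unique disintegration of $\f$ relative to the commutative subalgebra $L^{\infty}(\Om,\m)$.

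First, I would argue that $\f$ is even, i.e.\ $\f\circ\gps=\f$. Granting the (standing) norm-level graded asymptotic Abelianness of $\ca$ w.r.t.\ $\a$, the trivial majorization
$$
\big|\f\big(C\{\gpa_x(A),B\}_\eeps D\big)\big|\le\|C\|\,\|D\|\,\big\|\{\gpa_x(A),B\}_\eeps\big\|
$$
promotes this to the state-level condition \eqref{aaf} for every state on $\ga$. Proposition \ref{folk} applied to the $\gpa$-invariant state $\f$ then yields $\f\in\cs(\ga)_+$.

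Second, I would disintegrate $\f\circ\gps$. Since $\gps=\s\otimes\id_{L^{\infty}(\Om,\m)}$ acts trivially on the commutative tensor factor and fiberwise as $\s$ on $\ca$, for every $A\in\ga$ one has
$$
(\f\circ\gps)(A)=\int_{\Om}\f_\om\big(\s(A(\om))\big)\,\m(\di\om)=\int_{\Om}(\f_\om\circ\s)(A(\om))\,\m(\di\om),
$$
and the field $\{\f_\om\circ\s\}_{\om\in\Om}$ is $*$-weakly measurable and pointwise positive because $\{\f_\om\}$ is and $\s\in\aut(\ca)$ is norm-continuous. Combining with the first step, both $\{\f_\om\}$ and $\{\f_\om\circ\s\}$ disintegrate the same state $\f=\f\circ\gps$, which is normal on $L^{\infty}(\Om,\m)$. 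The uniqueness, modulo $\m$-null sets, of such a disintegration---precisely the statement of Proposition IV.8.34 of \cite{T1} used to produce \eqref{a2}---then forces $\f_\om\circ\s=\f_\om$ for $\m$-almost every $\om$, which is the desired evenness.

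The only delicate point is evenness of $\f$ in the first step: bare $\gpa$-invariance does not obviously force $\gps$-invariance, and one genuinely needs the graded asymptotic anticommutation of odd observables under the spatial translations, as imported from Proposition \ref{folk}. Once this is available, the remainder of the argument is a bookkeeping exercise using the fact that $\gps$ factors through the $\ca$-variable of $\ga$ together with the uniqueness of the direct integral decomposition.
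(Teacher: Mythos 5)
Your proof is correct and its decisive step coincides with the paper's: both arguments first obtain global evenness $\f\circ\gps=\f$ from $\gpa$--invariance via Proposition \ref{folk}, and both therefore share the same tacit reliance on graded asymptotic Abelianness, which the statement of the proposition does not list as a hypothesis but which you at least flag explicitly (note only that your norm majorization requires transporting the norm--level condition from $\ca_\pm$ to $\ga_\pm$, a routine approximation by elementary tensors). Where you diverge is in the second half: the paper argues by hand that $\f(A\otimes f)=0$ for every odd $A\in\ca_-$ and every $f\in L^\infty(\Om,\m)$, hence $\f_\om(A)=0$ almost everywhere for each fixed odd $A$, and then uses a countable dense subset of $\ca_-$ together with norm continuity of the forms $\f_\om$ to produce a single full--measure set; you instead observe that $\{\f_\om\}$ and $\{\f_\om\circ\s\}$ both disintegrate the same state $\f=\f\circ\gps$ and invoke the essential uniqueness of the disintegration from Proposition IV.8.34 of \cite{T1}. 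The two routes are equivalent in substance---the uniqueness modulo null sets is itself proved by exactly the paper's countable--dense--set argument---so your version is a cleaner packaging that hides, rather than removes, the separability bookkeeping.
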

\begin{proof}
As $\f$ is invariant w.r.t. the spatial translation, $\f\in\cs(\ga)_+$ (cf. Proposition \ref{folk}). Then
\begin{equation*}
%\label{a3}
\f(A\otimes f)=\int f(\om)\f_{\om}(A)\di\m(\om)=0
\end{equation*}
for each $f\in L^{\infty}(\Om,\m)$ and
$A\in\ca_-$. Thus, for each $A\in\ca_-$ there exists a measurable
set $\Om_{A}\subset\Om$ of full measure such that $\f_{\om}(A)=0$ on
$\Om_{A}$. As $\ca\sim\ca_+\bigoplus\ca_-$ as a Banach space, we can find a countable dense set $\cx\subset\ca_-$. Then
for each $A\in\cx$ we have $\f_{\om}(A)=0$, simultaneously on the measurable set
${\displaystyle\Om_{0}:=
\bigcap_{A\in\cx}\Om_{A}\subset\Om}$ of full measure. Fix
$A\in\ca_-$ and choose sequences $\{A_{n}\}\subset\cx$ converging to $A$. Then we have on
$\Om_0$,
$$
\f_{\om}(A)=\f_{\om}(\lim_{n}A_n)=\lim_{n}\f_{\om}(A_n)=0\,,
$$
that is the positive forms $\f_\om$ given in \eqref{a2} are even almost surely.
\end{proof}
We now show how the  graded asymptotic abelianness of the states on
$\ga$ under consideration directly follows from that of $\ca$.
\begin{prop}
\label{b2}
Let $\f\in\cs(\ga)$ be such that $\f\lceil_{L^{\infty}(\Om,\m)}$ is normal. If $\ca$ is graded asymptotically Abelian then for each $C\in\ga$ and $A,B\in\ga_\pm$ we have,
$$
\lim_{|x|\to+\infty}\f\left(C^*\{\gpa_{x}(A),B\}_\eeps^{*}\{\gpa_{x}(A),B\}_\eeps C\right)
=0\,.
$$
In particular, if $\f$ is invariant w.r.t. $\gpa$, it is graded
asymptotically Abelian.
\end{prop}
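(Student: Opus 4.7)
The plan is to reduce the claim to a norm estimate on the graded commutator $\{\gpa_x(A), B\}_\eeps$ in $\ga$. Specifically, I will show that $\|\{\gpa_x(A), B\}_\eeps\|_\ga \to 0$ as $|x|\to+\infty$, and from that both assertions follow by standard bounds for states.

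To obtain this norm estimate, I would exploit the $C^*$-tensor product structure $\ga = \ca \otimes L^{\infty}(\Om,\m)$ together with the grading. For $A \in \ga_\pm$, first approximate $A$ in $C^*$-norm by a finite sum $\sum_i A_i \otimes f_i$ with $A_i \in \ca$ and $f_i \in L^{\infty}(\Om,\m)$, then apply the grading projection $\frac{1}{2}(e\pm\gps) = \frac{1}{2}(e\pm\s)\otimes\id$ to replace each $A_i$ by its homogeneous component of the correct parity (using that $A$ is fixed by the projection). Handle $B$ in the same way. For a pair of homogeneous elementary tensors $A_0 \otimes f$ and $B_0 \otimes g$, a direct computation based on \eqref{trcomm} gives
\[
\{\gpa_x(A_0\otimes f),\,B_0\otimes g\}_\eeps(\om) = \{\a_x(A_0),B_0\}_\eeps\, f(T_{-x}\om)\,g(\om),
\]
and hence
\[
\|\{\gpa_x(A_0\otimes f),\,B_0\otimes g\}_\eeps\|_\ga \leq \|\{\a_x(A_0),B_0\}_\eeps\|_\ca\, \|f\|_\infty\, \|g\|_\infty,
\]
which vanishes as $|x|\to+\infty$ by the graded asymptotic Abelianness of $\ca$. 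Summing over the finite decomposition and passing to the limit in the approximations --- using norm continuity of $\{\cdot,\cdot\}_\eeps$ in each argument --- yields $\|\{\gpa_x(A),B\}_\eeps\|_\ga \to 0$ for arbitrary $A,B \in \ga_\pm$.

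From here the first assertion is immediate: setting $X := \{\gpa_x(A),B\}_\eeps$ and using that $\f$ is a state,
\[
0 \leq \f(C^* X^* X C) \leq \|C^*X^*XC\| \leq \|C\|^2\|X\|^2 \to 0.
\]
The second assertion is then a one-line Cauchy--Schwarz: for $C,D \in \ga$,
\[
|\f(C\,\{\gpa_x(A),B\}_\eeps\, D)|^2 \leq \f(CC^*)\, \f\bigl(D^*\{\gpa_x(A),B\}_\eeps^*\{\gpa_x(A),B\}_\eeps\, D\bigr),
\]
and the right-hand side vanishes by the first part.

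The one place requiring care is the approximation step: one must ensure that the elementary tensors used have first factors of matching parity, so that the sign $\eeps$ in $\{\cdot,\cdot\}_\eeps$ is unambiguous and the graded asymptotic Abelianness of $\ca$ applies term by term. Inserting the grading projections, as above, handles this cleanly, after which the remaining estimates are standard. Note that the normality hypothesis on $\f\lceil_{L^{\infty}(\Om,\m)}$ plays no role in the argument itself; it is present so the conclusion can be read against the integral decomposition \eqref{a2} of $\f$ used elsewhere.
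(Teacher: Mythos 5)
Your proof is correct, and at its core it rests on the same computation as the paper's: reduce to elementary tensors, use \eqref{trcomm} to write $\{\gpa_{x}(A_0\otimes f),B_0\otimes g\}_\eeps$ fibrewise as $f(T_{-x}\om)g(\om)\{\a_{x}(A_0),B_0\}_\eeps$, and then invoke the norm convergence $\|\{\a_x(A_0),B_0\}_\eeps\|\to0$ coming from the graded asymptotic Abelianness of $\ca$. The packaging, however, is genuinely different. The paper routes the estimate through the direct integral decomposition \eqref{a2}, bounding $\int_\Om|f(T_{-x}\om)g(\om)h(\om)|^2\f_\om(\cdots)\,\di\m(\om)$ and then passing to the GNS representation to extend from elementary tensors to general elements; this formally invokes the normality of $\f\lceil_{L^\infty(\Om,\m)}$. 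You instead prove the stronger statement that $\|\{\gpa_x(A),B\}_\eeps\|_{\ga}\to0$ for all $A,B\in\ga_\pm$, i.e.\ that $\ga$ itself is graded asymptotically Abelian in norm, after which both assertions for an arbitrary state follow from $\f(Y)\le\|Y\|$ and Cauchy--Schwarz. Your handling of the one delicate point --- inserting the contractive projections $\frac12(e\pm\gps)$ so that the approximating elementary tensors have homogeneous first factors of the right parity, keeping the sign $\eeps$ well defined term by term --- is exactly what is needed, and your observation that the normality hypothesis is not actually used is accurate: it is an artifact of the paper's chosen route through \eqref{a2} rather than a genuine requirement of the conclusion.
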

\begin{proof}
Let $f,g,h\in L^{\infty}(\Om,\m)$,
and $A,B\in\ca_\pm$, $C\in\ca$. We obtain by \eqref{trcomm}
and \eqref{a2},
\begin{align*}
&\f\left((C\otimes h)^*\{\gpa_{x}(A\otimes f),B\otimes g\}_\eeps^{*}
\{\gpa_{x}(A\otimes f),B\otimes g\}_\eeps C\otimes h\right)\\
=&\int_{\Om}\big|f(T_{-x}\om)g(\om)h(\om)\big|^{2}
\f_{\om}\left(C^*\{\a_{x}(A),B\}_\eeps^{*}\{\a_{x}(A),B\}_\eeps C\right)\di\m(\om)\\
\leq&(\| f\|_{\infty} \|g\|_{\infty}\|h\|_{\infty} \|C\|)^2\|\{\a_x(A),B\}_\eeps\|^2 \longrightarrow 0\,,
\end{align*}
as $\ca$ is graded asymptotically Abelian w.r.t. the spatial
translations. This means
$\|\pi_\f\left(\{\a_x(A),B\}_\eeps\right)\xi\|\to0$ on the Hilbert space of the GNS repersentation of
$\f$, which implies
$$
\lim_{|x|\to+\infty}\f\left(C\{\gpa_{x}(A),B\}_\eeps D\right)
=0
$$
for each $A,B\in\ga_\pm$ and $C,D\in\ga$
\end{proof}
Next we recall the definition of the KMS boundary condition which is
useful for the description of the temperature states of a quantum
dynamical system, see e.g. \cite{BR2}.

A state $\ff$ on the $C^{*}$--algebra $\gb$ satisfies the KMS
boundary condition at inverse temperature $\b$ which we suppose to be
always different from zero,
w.r.t the group of
automorphisms $\{\t_{t}\}_{t\in\br}$ if
\begin{itemize}
\item[(i)] $t\mapsto\ff(A\t_{t}(B))$ is a continuous function for
every $A,B\in\gb$,
\item[(ii)]
$\int\ff(A\t_{t}(B))f(t)\di t=\int\ff(\t_{t}(B)A)f(t+i\b)\di t$
whenever $f\in\widehat{\cd}$, $\cd$ being the
space made of all infinitely
often differentiable compactly supported functions in $\br$.
\end{itemize}
For the equivalent characterizations of the KMS boundary
condition, the main results about KMS states, and finally the
connections with Tomita theory of von Neumann algebras, see
e. g. \cite{BR1, BR2, St} and the references cited therein.

It is well known that the cyclic vector $\Om_{\ff}$ of the GNS
representation $\pi_{\ff}$ is also separating for
$\pi_{\ff}(\gb)''$. Denote by $\s^{\ff}$ its modular group.
According to the definition of KMS boundary condition, we have
\begin{equation}
\label{modgns}
\s^{\ff}_{t}\circ\pi_{\ff}=\pi_{\ff}\circ\t_{-\b t}\,.
\end{equation}
We end the present section by listing some useful properties of
states, which are normal (not necessarily KMS) when restricted to
$L^{\infty}(\Om,\m)$, contained in Section 3 and 4 of \cite{BF}.
\begin{prop}
\label{001}
Let $\f$ be $\gpt$--KMS state on
$\ga$ at inverse temperature $\b$ which is normal when restricted to $L^{\infty}(\Om,\m)$.

Then, for almost all $\om\in\Om$, the forms $\f_{\om}$ given in \eqref{a2}
are $\t^{\om}$--KMS at the same inverse temperature $\b$.
\end{prop}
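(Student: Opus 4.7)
The plan is to transfer the KMS identity from $\ga$ down to almost every fiber $\f_\om$ by testing against elementary tensors and applying Fubini, then to collapse the resulting family of null sets (one per triple $(A,B,f)$) into a single null set via separability.

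First I would fix $A,B\in\ca$ (viewed as constant fields in $\ga$) and $h,k\in L^\infty(\Om,\m)$, and evaluate the $\gpt$--KMS condition on $A\otimes h$ and $B\otimes k$. Since the time evolution acts locally, $\gpt_t(B\otimes k)(\om)=k(\om)\t^\om_t(B)$, so the integral representation \eqref{a2} gives
\begin{align*}
\f\big((A\otimes h)\gpt_t(B\otimes k)\big)&=\int_\Om h(\om)k(\om)\f_\om\!\big(A\t^\om_t(B)\big)\,\m(\di\om),\\
\f\big(\gpt_t(B\otimes k)(A\otimes h)\big)&=\int_\Om h(\om)k(\om)\f_\om\!\big(\t^\om_t(B)A\big)\,\m(\di\om).
\end{align*}
Substituting into condition (ii) of the KMS definition and applying Fubini (justified by the rapid decay of $f\in\widehat{\cd}$ and the uniform bound $\|\f_\om\|\le g_\f(\om)$ integrable in $\om$) yields
$$\int_\Om h(\om)k(\om)\,G(A,B,f;\om)\,\m(\di\om)=0,$$
where $G(A,B,f;\om):=\int f(t)\f_\om(A\t^\om_t(B))\,\di t-\int f(t+i\b)\f_\om(\t^\om_t(B)A)\,\di t$. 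Since $h,k$ are arbitrary (take $k=\idd$ and $h$ ranging over a total set in $L^\infty(\Om,\m)$), the measurable function $\om\mapsto G(A,B,f;\om)$ must vanish $\m$-a.e., producing a null set $N_{A,B,f}$ outside which the fiber identity holds.

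The main obstacle, and the step that requires care, is the quantifier exchange from ``for each $(A,B,f)$ there is a null set'' to ``there is a null set good for all $(A,B,f)$.'' I would exploit separability throughout: pick countable dense subsets $\{A_n\}\subset\ca$ and $\{f_m\}\subset\widehat{\cd}$ (the latter being separable in, say, the topology of uniform convergence of $f$ together with bounds giving convergence of $f(\cdot+i\b)$), and set $N:=\bigcup_{n,n',m}N_{A_n,A_{n'},f_m}$. For $\om\notin N$ the identity $G(A_n,A_{n'},f_m;\om)=0$ holds simultaneously for all $n,n',m$. The uniform bounds $|G(A,B,f;\om)|\le\|\f_\om\|\|A\|\|B\|(\|f\|_{L^1}+\|f(\cdot+i\b)\|_{L^1})$ together with trilinearity in $(A,B,f)$ let one pass to arbitrary $A,B\in\ca$ and $f\in\widehat{\cd}$ by density and continuity.

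It remains to check the continuity condition (i) of the KMS definition for $\f_\om$ almost surely. Since $\t^\om$ is $\s$--strongly (hence weakly) continuous in $t$ and $\f_\om$ extends to a normal state on the associated von Neumann algebra $\pi_{\f_\om}(\ca)''$ via its GNS representation (which, as noted after \eqref{reaa}, coincides with the fiber representation $\pi_\om$ occurring in \eqref{a1}), the map $t\mapsto\f_\om(A\t^\om_t(B))$ is continuous for a.e. $\om$. Combining this with the integral identity established above shows that, for $\om\notin N$, $\f_\om$ is a $\t^\om$--KMS form at the same inverse temperature $\b$, completing the proof.
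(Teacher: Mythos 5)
Your proposal is correct and follows essentially the same route as the paper, which proves the statement ``along the same lines'' as Proposition \ref{001ev} (and the argument in Section 3 of \cite{BF}): test the KMS identity on elementary tensors, deduce via \eqref{a2} and Fubini that the fiberwise defect vanishes $\m$--a.e.\ for each fixed datum, and then collapse the null sets using countable dense subsets of $\ca$ and of the test functions together with the uniform bound and continuity. Your treatment of the quantifier exchange and of condition (i) supplies exactly the details the paper delegates to \cite{BF}.
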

\begin{proof}
The proof is along the same lines as those of Proposition
\ref{001ev}, which is reported in Section 3 of \cite{BF}.
\end{proof}
\begin{thm}
\label{c1}
Let $\f$ be an invariant state on $\ga$ which is normal if restricted to $L^{\infty}(\Om,\m)$.
Consider the decomposition appearing in \eqref{a2}. Then
\begin{itemize}
\item[(i)] $\f_{\om}\circ\a_{x}=\f_{T_{-x}\om}$ for all $x\in\bz_{d}$,
\item[(ii)] $\gz_{\pi_{\f_{\om}}}\cong\gz_{\pi_{\f_{T_{x}\om}}}$
for all $x\in\bz_{d}$,
\end{itemize}
where the above equalities, as well as the unitary equivalence,
are satisfied almost everywhere.

In addition, if the action $T_{x}$ on $\Om$ is ergodic, then
$\f(\idd)=1$ almost surely.
\end{thm}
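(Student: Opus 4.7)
The plan for (i) is to test the $\gpa$-invariance of $\f$ against elementary tensors. Fix $B\in\ca$ and $f\in L^{\infty}(\Om,\m)$; then $A:=B\otimes f\in\ga$ is represented by $A(\om)=f(\om)B$, and by \eqref{trcomm} one has $\gpa_{x}(A)(\om)=f(T_{-x}\om)\a_{x}(B)$. Writing out $\f(\gpa_{x}(A))=\f(A)$ via \eqref{a2} yields
\[
\int_{\Om}f(T_{-x}\om)\f_{\om}(\a_{x}(B))\,\m(\di\om)
=\int_{\Om}f(\om)\f_{\om}(B)\,\m(\di\om).
\]
Since $T_{x}$ preserves $\m$, the substitution $\om\mapsto T_{x}\om$ rewrites the left-hand side as $\int_{\Om}f(\om)\f_{T_{x}\om}(\a_{x}(B))\,\m(\di\om)$. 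As $f$ ranges over $L^{\infty}(\Om,\m)$, this forces $\f_{T_{x}\om}(\a_{x}(B))=\f_{\om}(B)$ for $\m$-a.e.\ $\om$, which after relabeling is the desired identity $\f_{\om}\circ\a_{x}=\f_{T_{-x}\om}$ applied to $B$.

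To globalize this pointwise identity, note that the exceptional null set initially depends on the pair $(B,x)$. Separability of $\ca$ lets me fix a countable norm-dense $\cx\subset\ca$ and consolidate the exceptional sets indexed by $\cx\times\bz^{d}$ into a single null set $N\subset\Om$. Off $N$, the identity holds for every $B\in\cx$ and every $x\in\bz^d$ simultaneously; norm-continuity of each $\f_{\om}$ together with the isometric character of $\a_{x}$ then extends it to all of $\ca$, establishing (i).

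Claim (ii) is then an almost formal consequence of (i) via the naturality of the GNS construction. For any state $\psi$ on $\ca$, the triple $(\ch_{\psi},\pi_{\psi}\circ\a_{x},\F_{\psi})$ realizes the GNS data of $\psi\circ\a_{x}$, so $\pi_{\psi\circ\a_{x}}$ is unitarily equivalent to $\pi_{\psi}\circ\a_{x}$. Applying this with $\psi=\f_{\om}$ and using that $\a_{x}\in\aut(\ca)$,
\[
\pi_{\f_{\om}\circ\a_{x}}(\ca)''\cong(\pi_{\f_{\om}}\circ\a_{x})(\ca)''
=\pi_{\f_{\om}}(\a_{x}(\ca))''=\pi_{\f_{\om}}(\ca)'',
\]
so the centers are isomorphic. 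Combining with (i) yields $\gz_{\pi_{\f_{\om}}}\cong\gz_{\pi_{\f_{T_{x}\om}}}$ almost everywhere, which is (ii).

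For the ergodicity assertion, set $g(\om):=\f_{\om}(\idd)$; the $*$-weak measurability of the field $\{\f_{\om}\}$ makes $g$ measurable, and (i) applied to $B=\idd$ gives $g(T_{-x}\om)=\f_{\om}(\a_{x}(\idd))=g(\om)$ a.e.\ for every $x$. Hence $g$ is $T$-invariant, and ergodicity of $T$ forces $g$ to be a.e.\ constant; integrating, the constant equals $\int g\,\di\m=\f(\idd)=1$, since $\idd\in\ga$ corresponds to the constant field $\om\mapsto\idd$. Thus $\f_{\om}(\idd)=1$ almost surely, as claimed. I expect the only real difficulty to lie in the second paragraph---aligning the null sets indexed by $(B,x)$ into one genuinely full-measure set before invoking norm continuity---while the remaining steps are essentially formal, relying on the naturality of GNS under automorphisms and a single application of the ergodic principle.
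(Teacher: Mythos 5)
Your proof is correct, and it follows essentially the same route the paper intends: the paper omits the proof of Theorem \ref{c1}, deferring to Sections 3 and 4 of \cite{BF}, but the technique there (and in the paper's own proof of Proposition \ref{001ev}) is exactly your argument --- test invariance on elementary tensors $B\otimes f$, use the arbitrariness of $f\in L^{\infty}(\Om,\m)$ to get a pointwise a.e.\ identity, consolidate the null sets over a countable dense subset of $\ca$ and the countable group $\bz^d$, and finish with GNS naturality and the ergodic theorem. No gaps.
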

Consider for $\a\in\{\infty\}\cup\{1,2,\dots\}\cup
\{\l_{\infty},\l_{0},\l_{1},\dots\}$, the Abelian von Neumann
algebras $L^{\infty}(E_{\a},\n_{\a})$ defined as follows. For
$\a\in\{\infty\}\cup\{1,2,\dots\}$, $(E_{n},\n_{n})$ is the
countable set $E_{n}={\bf n}$ of cardinality $n$, equipped with
the counting measure $\n_{n}$ (the symbol $\infty$ corresponds to
the denumerable cardinality). For $\a=\l_{n}$,
$(E_{\l_{n}},\n_{\l_{n}})$ is the disjont union $[0,1]\cup{\bf n}$
equipped with the measure $\n_{\l_{n}}$ made of the Lebesgue
measure $\l$ on $[0,1]$, and the counting measure on ${\bf n}$
(the value $n=0$ corresponds to $L^{\infty}([0,1],\l)$).

As a corollary to Theorem \ref{c1}, we have
\begin{cor}
\label{c2} 
Let $\f$ be as in Theorem \ref{c1}, and suppose that $T_{x}$ acts ergodically on
$\Om$. Then there exists a
unique $\a\in\{\infty\}\cup\{1,2,\dots\}\cup
\{\l_{\infty},\l_{0},\l_{1},\dots\}$ such that
$$
\gz_{\pi_{\f_{\om}}}\sim L^{\infty}(E_{\a},\n_{\a})
$$
almost surely.
\end{cor}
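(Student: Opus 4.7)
The plan is to combine Theorem \ref{c1}(ii) with the ergodicity of $\{T_x\}_{x\in\bz^d}$ on $(\Om,\m)$, together with the classification theorem for separable Abelian von Neumann algebras. The idea is that Theorem \ref{c1}(ii) tells us that the isomorphism class of $\gz_{\pi_{\f_\om}}$ is a.s.\ invariant under the spatial translations on $\Om$, and ergodicity then forces this class to be a.s.\ constant, which, by the classification, is exactly what the corollary asserts.

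First, I recall the classification: every separable Abelian von Neumann algebra is isomorphic to exactly one $L^{\infty}(E_\a,\n_\a)$ for some $\a$ in the stated list. This follows from the standard decomposition of a separable measure space into its atomic and nonatomic (Lebesgue) parts, combined with Maharam's theorem in the nonatomic case. Thus, on the set of (isomorphism classes of) separable Abelian von Neumann algebras, the label $\a$ is well-defined and takes values in the given countable index set.

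Second, I would verify that $\om\mapsto \a(\om)$ is measurable, where $\a(\om)$ is the label of $\gz_{\pi_{\f_\om}}$. This uses the fact that the fibers $\pi_\om$ in the direct integral \eqref{a1} form a measurable field, so that $\{\pi_\om(\ca)''\}_{\om\in\Om}$ and hence the field of centers $\{\gz_{\pi_{\f_\om}}\}_{\om\in\Om}$ is measurable. The invariants that distinguish the various $\a$'s — number of atoms (finite or $\infty$), total mass of the atomic part, and presence or absence of a nonatomic component — can each be expressed as measurable functions of $\om$ via standard direct-integral arguments. Hence $\a:\Om\to\{\infty\}\cup\{1,2,\dots\}\cup\{\l_\infty,\l_0,\l_1,\dots\}$ is measurable.

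Third, by Theorem \ref{c1}(ii), for each $x\in\bz^d$ there is a measurable set $\Om_x\subset\Om$ of full measure on which $\gz_{\pi_{\f_\om}}\cong \gz_{\pi_{\f_{T_x\om}}}$; hence $\a(\om)=\a(T_x\om)$ on $\Om_x$. Taking the countable intersection ${\displaystyle\Om_0:=\bigcap_{x\in\bz^d}\Om_x}$, which still has full measure, we obtain a fully $\{T_x\}$-invariant measurable function $\a$ on $\Om_0$. By the assumed ergodicity of the $\bz^d$-action on $(\Om,\m)$, $\a$ must be a.s.\ constant, equal to a unique value in the classification list, which proves the corollary. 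The main delicacy is the measurability step; once that is granted, the ergodic argument is routine and mirrors the final conclusion of Theorem \ref{c1}.
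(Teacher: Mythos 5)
Your argument is correct and coincides with the one the paper leaves implicit: Theorem \ref{c1}(ii) gives a.s.\ invariance of the isomorphism class of $\gz_{\pi_{\f_{\om}}}$ under the countable group $\bz^{d}$, the classification of separable Abelian von Neumann algebras reduces that class to the countable list of labels $\a$, and ergodicity forces the label to be a.s.\ constant. The paper states the corollary without proof, so your extra care about measurability of the label (the only genuinely delicate point, consistent with the paper's subsequent remark that the \emph{multiplicity} class, unlike the isomorphism class, cannot be pinned down this way) is a reasonable supplement rather than a deviation.
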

We cannot conclude that $\gz_{\pi_{\f_{\om}}}$ is almost surely of
a unique multiplicity class.\footnote{Notice that there are
uncountable many Abelian von Neumann algebras acting on separable
Hilbert spaces, up to unitary equivalence.} However, for locally
normal invariant KMS states we have that $\gz_{\pi_{\f_{\om}}}$ is
almost surely of infinite multiplicity, see below.
\begin{defin}
For the $\bz_2$--graded models considered in the present paper, we denote by
$\cs_{N}(\ga)\subset\cs(\ga)_+$ the subset of the even states $\f$ such that
$\f\lceil_{L^{\infty}(\Om,\m)}$ is normal. The set $\cs_{NI}(\ga)\subset\cs_{N}(\ga)$ consists of those states $\f\in\cs_{N}(\ga)$ which are in addition $\gpa$--invariant.
\end{defin}
In the classical setting, the the measurable field $\{\f_\om\}$ arising from the direct integral decomposition of a temperature state $\f\in\cs_{NI}(\ga)$ is called an {\it Aizenman--Wehr metatstate}, see \cite{AW}. In the quantum case, the counterparts of the Aizenman--Wehr metatstates were naturally considered early in \cite{K}.

\section{spectral properties of $\bz_2$--graded asymptotically abelian dynamical systems}
\label{asgab}

The present section is devoted to prove some useful results
concerning the spectral properties of  dynamical systems which are
$\bz_2$--graded asymptotically Abelian, the last being the natural setting
for theories including Fermi particles. The results proved below and in the next section 
have a self contained interest as they provide the generalization to
the $\bz_2$--graded dynamical systems of the pivotal results of
\cite{A, HL, Se}, and those reviewed in \cite{L} for the natural
applications to the investigation of the structure of the local
algebras in Quantum Field Theory. For the definition and the main
properties of the Borchers, Connes and Arveson spectra $\G_B(\a)$,
$\G(\a)$, $\sp(\a)$ of an action $\a$, and then the Borchers and
Connes invariants $\G_B(M)$, $\G(M)$ of a von Neumann algebra $M$,
respectively, the reader is referred to the original papers
\cite{Bo, C} and the books \cite{P, Su}.

One of the main objects of interest in the investigation of the
spectral properties of non commutative dynamical systems is the
Arveson spectrum and its connection with the spectrum of the group
of unitaries implementing the dynamics in the covariant GNS
representation, see e.g. \cite{P}. It was shown that for the
dynamical systems based on random interactions treated in \cite{B,
BF} the Arveson spectrum is almost surely independent of the
disorder.\footnote{Compare with the analogous result (\cite{KS},
Th\'eor\`eme III.1) concerning the spectrum of a one dimensional
random discretized Schr\"odinger operator.} In addition, for most of
the KMS states considered in \cite{B}, the spectrum of the
associated modular group was also found to be independent of the
disorder. The proofs of such results depend mainly on the general
properties assumed for our disordered model and not so much on the
local structure of the $C^*$--algebra $\ca$. Therefore the proofs of
these results which appear as Theorem 5.3 and Proposition 5.5 in
\cite{B} can be reproduced {\it mutatis mutandis} for the situation
under consideration in the present paper. To this end, we suppose that $\ca$ is a separable unital $C^*$--algebra. Let $\t^{\om}_{t}$ be jointly
measurabile in $t$, $\om$. Assume the commutation rule
\eqref{4} and the ergodicity of the action $T_x$ of the spatial
translations on the sample space $(\Om,\m)$. Put
$\ga:=\ca\otimes L^\infty(\Om,\m)$ and choose a KMS
state $\f\in\cs_{NI}(\ga)$ at inverse temperature $\b\neq0$. Consider the forms $\f_\om$
in \eqref{a2}, which thanks to Propositions \ref{001ev} and
\ref{001}, are even and satisfy the KMS condition at the same
inverse temperature $\b$ almost surely.
\begin{thm}
\label{ba}
Under the above assumptions,
there exists a measurable
set $F\subset\Om$ of full measure, and a closed set $\S\subset\br$
such that $\om\in F$ implies $\sp(\t^{\om})=\S$.

In addition, if $\ca$ is simple, then
\begin{equation*}
%\label{prrr3}
\b\sp(\t^{\om})=-\s(\ln\D_{\f_{\om}})
\end{equation*}
almost surely, where $\D_{\f_{\om}}$ is the modular operator associated to $\f_{\om}$.
\end{thm}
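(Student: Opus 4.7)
The proof follows, \emph{mutatis mutandis}, the pattern of Theorem 5.3 and Proposition 5.5 of \cite{B}: neither the $\bz_2$--grading nor the local structure of $\ca$ plays any role here, since the Arveson spectrum is a purely dynamical datum. I treat the two claims in turn.

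For the first assertion, I would begin by verifying the $T$--invariance of $\omega\mapsto\sp(\tau^\omega)$: the commutation rule \eqref{4} rewrites as $\tau^{T_x\omega}_t=\alpha_x\circ\tau^\omega_t\circ\alpha_x^{-1}$, so the two flows are conjugate by an automorphism of $\ca$ and therefore share the same Arveson spectrum. The crux is then a measurability argument for $\omega\mapsto\sp(\tau^\omega)$ viewed as a map into the closed subsets of $\br$. Fixing a countable base $\{I_k\}_{k}$ of open intervals with rational endpoints, I set $N_k:=\{\omega:\sp(\tau^\omega)\cap I_k=\emptyset\}$. Using the Arveson characterization $\lambda\notin\sp(\tau^\omega)\iff\exists f\in L^1(\br),\,\widehat{f}(\lambda)\neq 0,\,\tau^\omega_f=0$, combined with separability of $\ca$ (a countable dense family $\{a_n\}$) and of $L^1(\br)$ (a countable dense family $\{f_m\}$), one expresses $N_k$ as a countable Boolean combination of sets of the form $\{\omega:\tau^\omega_{f_m}(a_n)=0\}$; the latter are measurable by the joint $\sigma$--strong measurability of $(t,\omega)\mapsto\tau^\omega_t$ assumed in Section \ref{gmlc}. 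Each $N_k$ is then measurable and $T$--invariant, so by ergodicity $\m(N_k)\in\{0,1\}$. Setting $F:=\{\omega:\chi_{N_k}(\omega)=\m(N_k)\text{ for every }k\}$, which has full measure, and $\Sigma:=\br\setminus\bigcup_{\m(N_k)=1}I_k$, one obtains $\sp(\tau^\omega)=\Sigma$ throughout $F$.

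For the second assertion, simplicity of $\ca$ makes $\ker\pi_{\varphi_\omega}$ a proper closed two--sided ideal of $\ca$, hence trivial whenever $\varphi_\omega\neq 0$ (which holds almost surely by Theorem \ref{c1}); so $\pi_{\varphi_\omega}$ is isometric and the Arveson spectrum is preserved upon passing to the canonical $\sigma$--weakly continuous extension $\overline{\tau}^\omega$ on $M_\omega:=\pi_{\varphi_\omega}(\ca)''$, giving $\sp(\tau^\omega)=\sp(\overline{\tau}^\omega)$. By Proposition \ref{001}, $\varphi_\omega$ is $\tau^\omega$--KMS at inverse temperature $\beta$ almost surely, so \eqref{modgns} identifies the modular flow on $M_\omega$ as $\sigma^{\varphi_\omega}_t=\overline{\tau}^\omega_{-\beta t}$; the scaling property of the Arveson spectrum under linear time reparametrization then gives $\sp(\sigma^{\varphi_\omega})=-\beta\sp(\tau^\omega)$. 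To close the chain, I would use that $\Omega_{\varphi_\omega}$ is cyclic and separating for $M_\omega$ (a standard KMS consequence) to compute, via spectral calculus applied to $\log\Delta_{\varphi_\omega}$, the identity $\sigma^{\varphi_\omega}_f(x)\Omega_{\varphi_\omega}=\widehat{f}(-\log\Delta_{\varphi_\omega})\,x\,\Omega_{\varphi_\omega}$; the separating property then yields $\sigma^{\varphi_\omega}_f=0\iff\widehat{f}\equiv 0$ on $-\sigma(\log\Delta_{\varphi_\omega})$, which coincides with $\sigma(\log\Delta_{\varphi_\omega})$ by the modular symmetry $J\log\Delta_{\varphi_\omega}J=-\log\Delta_{\varphi_\omega}$. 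Hence $\sp(\sigma^{\varphi_\omega})=\sigma(\log\Delta_{\varphi_\omega})$, and combining this with the scaling identity gives the claim.

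The main technical hurdle is the measurability step in part (i): one must translate the purely Fourier--analytic characterization of the Arveson spectrum into a countable combination of measurable predicates on $\Omega$, which rests delicately on the simultaneous separability of $\ca$ and $L^1(\br)$ together with the joint measurability hypothesis on the disordered dynamics. The remaining inputs --- invariance under automorphism conjugation, linear rescaling of the Arveson spectrum, and the Tomita--Takesaki identification $\sp(\sigma^{\varphi_\omega})=\sigma(\log\Delta_{\varphi_\omega})$ --- are classical, but one must verify that each of them goes through fiberwise in the direct integral decomposition of Section \ref{sssttt}, uniformly outside a common null set.
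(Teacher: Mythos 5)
Your treatment of the second assertion coincides with the paper's: simplicity forces $\pi_{\f_\om}$ to be faithful almost surely, so the Arveson spectrum is unchanged when passing to the weak closure, and the KMS relation \eqref{modgns} together with the identification $\sp(\s^{\f_\om})=\s(\ln\D_{\f_\om})$ and the linear rescaling of the spectrum yields $\b\sp(\t^\om)=-\s(\ln\D_{\f_\om})$. Your derivation of the $T$--invariance of $\om\mapsto\sp(\t^\om)$ from \eqref{4} via the conjugation $\t^{T_x\om}_t=\a_x\t^\om_t\a_x^{-1}$ is also exactly the mechanism the paper relies on. The problem lies in the measurability step of the first assertion, which you correctly identify as the crux but do not actually carry out. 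You claim that $N_k=\{\om:\sp(\t^\om)\cap I_k=\emptyset\}$ is a countable Boolean combination of the sets $\{\om:\t^\om_{f_m}(a_n)=0\}$, based on the characterization ``$\l\notin\sp(\t^\om)$ iff there exists $f\in L^1(\br)$ with $\hat f(\l)\neq0$ and $\t^\om_f=0$''. The existential quantifier here ranges over the set $I_\om:=\{f\in L^1(\br):\t^\om_f=0\}$, which is a proper closed convolution ideal of $L^1(\br)$ and hence has empty interior; a countable dense family $\{f_m\}$ of $L^1(\br)$ need not meet $I_\om$ at all. Density only gives you $\|\t^\om_{f_m}\|$ small, never exactly zero, so the predicate ``$\exists f\in I_\om$ with $\hat f(\l)\neq0$'' is not expressible through the tests $\t^\om_{f_m}(a_n)=0$, and the proposed Boolean combination does not represent $N_k$.

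The paper avoids this by working with the norm characterization $\sp(\t^\om)=\bigcap_{f\in L^1(\br)}\bigl\{s\in\br:|\hat f(s)|\leq\|\t^\om_f\|\bigr\}$ (Pedersen, Proposition 8.1.9). Since $f\mapsto\|\t^\om_f\|$ is $1$--Lipschitz for $\|\cdot\|_{L^1}$, the intersection can legitimately be restricted to a countable dense family $\{f_k\}$; the functions $\G_k(\om):=\|\t^\om_{f_k}\|$ are measurable (via separability of $\ca$ and the joint measurability of the flow) and invariant (via the conjugation above), hence constant almost everywhere by ergodicity, and the spectrum is therefore constant off a countable union of null sets. Your argument can be repaired by switching to this characterization, or by replacing your existential criterion with the universal one ``$\sp(\t^\om)\cap U=\emptyset$ iff $\t^\om_f=0$ for every $f$ with $\supp\hat f\subset U$'', where restriction to a countable dense family of such $f$ is harmless. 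As written, however, the measurability reduction fails.
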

\begin{proof}
We get by \cite{P}, Proposition 8.1.9,
$$
\sp(\t^{\om})=\bigcap_{f\in L^{1}(\br)}
\big\{s\in\br\;\big|\;|\hat{f}(s)|\leq\|\t^{\om}_{f}\|\big\}
$$
where `` $\hat{}$ '' stands for (inverse) Fourier transform, and
$$
\t^{\om}_{f}(A):=\int_{-\infty}^{+\infty}f(t)\t^{\om}_{t}(A)\di
t\,,
$$
the integral being understood in the Bochner sense. By a standard
density argument, we can reduce the situation to a dense set
$\{f_{k}\}_{k\in\bn}\subset L^{1}(\br)$. Define
$\G_{k}(\om):=\|\t^{\om}_{f_{k}}\|$. It was shown in \cite{B} that
the functions $\G_{k}$ are measurable and invariant. By
ergodicity, they are constant almost everywhere. Let
$\{N_{k}\}_{k\in\bn}$ be null subsets of $\Om$ such that, for each
$k\in\bn$ and $\om\in N_{k}^{{}^{c}}$,
$$
\G_{k}(\om)=\|\G_{k}\|_{\infty}\,.
$$
Consider $F:=\big(\bigcup_{k\in\bn}N_{k}\big)^{c}$, and take
$\S:=\sp(\t^{\om_{0}})$, where $\om_{0}$ is any element of $F$. As
an immediate consequence of this, we have that $F$ is a measurable
set of full measure, and $\om\in F$ implies $\sp(\t^{\om})=\S$.

Consider the GNS covariant representation $(\ch_\om,\pi_\om, U_\om, \F_\om)$ of $\f_\om$.
Thanks to the facts that, on a measurable set $F\in\Om$ of full measure, $\F_\om$ is a standard vector for $\pi_\om(\ca)''$ and $\pi_\om$ is faithful as $\ca$ is simple, we get
for $f\in L^1(\br)$, $U_\om(f):=\int f(t)U_\om(t)\di t=0$
if and only if $\t^\om_f\equiv\int f(t)\t^\om_t\di t=0$. For $\om\in F$, this leads to
$\sp(\t^{\om})=-\frac1{\b}\s(\ln\D_{\f_{\om}})$ by \eqref{modgns}.\footnote{Notice that if $\ca$ is not simple, we merely have $\sp(\tilde\t^{\om})=-\frac1{\b}\s(\ln\D_{\f_{\om}})$ where
$\tilde\t^{\om}_t:=\ad U_\om(t)$ acting on $\pi_{\f_\om}(\ca)''$. It can be showed as before, that it is independent on the disorder.}
\end{proof}
Now we generalize a standard result (cf. \cite{A, HL, Se})
on the spectrum of the modular action of asymptotically Abelian
systems to the $\bz_2$--graded case. The proof in the graded case will be more involved then that for the asymptotically Abelian situation.
\begin{thm}
\label{lng}
Let $(M, G, \t)$, $(M, H, \a)$ be $W^*$--dynamical systems based on the $\bz_2$--graded
$W^*$--algebra $M$, with $G$ locally compact and Abelian. Suppose that the actions $\t$ and
$\a$ are even, commute each other, and leave invariant the faithful normal state $\f$.

If for an invariant mean
$m_H$ on $H$,
and for each $A\in M_\pm$, $B\in Z(M^\t)_\pm$,
\begin{equation}
\label{czlo}
m_H\left\{\f\left(\{\a_{h}(A),B\}_\eeps^{*}\{\a_{h}(A),B\}_\eeps \right)\right\}=0\,,
\end{equation}
then $\G_B(\t)=\sp(\t)$.\footnote{For the definition of invariant means of a (semi)group, we refer the reader to Section 17 of \cite{HR}.}
%$\G_B(\t)$, $\sp(\t)$ being the Borchers and the Arveson spectrum of $\t$, respectively.
\end{thm}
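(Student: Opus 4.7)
The plan is to prove the nontrivial inclusion $\sp(\t)\subseteq\G_B(\t)$, since $\G_B(\t)\subseteq\sp(\t)$ holds automatically. I would rely on the standard characterization of the Borchers spectrum adapted to the present $W^*$--setting with a faithful invariant state: $\l\in\G_B(\t)$ if and only if for every open neighborhood $U$ of $\l$ and every nonzero projection $p\in Z(M^\t)$, the reduced spectral subspace $pM^\t(U)p$ is nonzero (equivalently, $\l$ lies in the Arveson spectrum of $\t$ restricted to $pMp$). Thus, given $\l\in\sp(\t)$, a neighborhood $U$, and such a projection $p$, the task reduces to producing some $A\in M^\t(U)$ with $pAp\neq 0$.

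First I would, by the Tauberian ingredient of Arveson's spectral calculus, pick $f\in L^1(G)$ with $\widehat f$ supported in $U$ and $\t_f\neq 0$, and select $X\in M$ with $A_0:=\t_f(X)\in M^\t(U)\setminus\{0\}$. Because $\t$ commutes with the grading $\s$, the spectral subspace splits as $M^\t(U)=M^\t(U)_+\oplus M^\t(U)_-$, and at least one parity component of $A_0$ is nonzero, so I may replace $A_0$ by an element $A\in M^\t(U)\cap M_\eta$ of definite parity $\eta\in\{+,-\}$. The projection $p$ is not assumed $\s$--invariant, but since $\s$ preserves $Z(M^\t)$ it decomposes uniquely as $p=p_++p_-$ with $p_\pm=\tfrac12(p\pm\s(p))\in Z(M^\t)_\pm$ self--adjoint. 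A parity count then shows that $pAp=0$ is equivalent to the pair of identities $p_+Ap_++p_-Ap_-=0$ and $p_+Ap_-+p_-Ap_+=0$.

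Assuming toward contradiction that $pAp=0$, I would apply the hypothesis \eqref{czlo} with $B=p_+$ and with $B=p_-$. In the GNS covariant triple $(\ch_\f,\pi_\f,\F)$ this translates into the statement that the vectors $\{\a_h(A),p_\zeta\}_\eeps\F$ tend to zero in $m_H$--mean for each $\zeta\in\{+,-\}$; equivalently $\a_h(A)p_\zeta$ is asymptotically graded--equal in $m_H$--mean to $\eeps_{A,p_\zeta}p_\zeta\a_h(A)$ at the level of the Hilbert space norm $\|\cdot\|_\f$. Applying $\a_h$ to the two parity sectors of $pAp=0$, using the $\a$--invariance of $\f$, and inserting the approximate graded commutation relations, I expect the two-sided vanishing $p_\zeta Ap_{\zeta'}=0$ to migrate, after averaging with $m_H$, to the one-sided identity $Ap=0$. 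Combined with the symmetric computation that yields $pA=0$, this would force the central support of $A^*A$ in $M^\t$ to lie under $\idd-p$; running the argument for an arbitrary nonzero $p$ shrinks the central support to zero, contradicting $A\neq 0$. Hence $\sp(\t)\subseteq\G_B(\t)$, as required.

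The main obstacle is the passage from $m_H$--mean vanishing of $\{\a_h(A),p_\zeta\}_\eeps\F$ to the exact algebraic identity $Ap=0$: this requires exploiting the faithfulness and normality of $\f$ together with its simultaneous $\t$-- and $\a$--invariance, and executing a Dixmier--style averaging in the spirit of the asymptotically Abelian arguments of \cite{A,HL,Se}, now transplanted to the $\bz_2$--graded framework. A secondary subtlety, not present in the ungraded version, is that the two parity components $p_+$ and $p_-$ of a central projection must be treated in parallel---with the ordinary commutator for $p_+$ and the anticommutator for $p_-$---and only the graded commutator $\{\,\cdot\,,\,\cdot\,\}_\eeps$ appearing in the hypothesis allows the two sectors to be combined coherently, which is the technical feature that makes the extension to Fermionic systems work.
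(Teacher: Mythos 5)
Your overall shape is right (reduce to showing $\sp(\t)\subseteq\sp(\t^E)$ for suitable central projections, split by parity, exploit the graded--commutator hypothesis via Cauchy--Schwarz), but there are three genuine gaps. First, your characterization of the Borchers spectrum is wrong: $\G_B(\t)$ is the intersection of $\sp(\t^E)$ over projections $E\in Z(M^\t)$ with \emph{central support} $c(E)=\idd$ in $M$ (Proposition 1 of \cite{HL}), not over all nonzero projections of $Z(M^\t)$. With your version the statement you set out to prove is generally false (take a central summand of $M$ on which $\t$ acts trivially and let $p$ be its unit), and the condition $c(E)=\idd$ is used essentially, twice, in the paper's argument. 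Second, you try to produce $A\in M(\t,U)$ with $pAp\neq 0$ directly; what is actually needed, and what the paper proves, is that $E\a_h(A)E\neq 0$ for \emph{some} $h\in H$ (this still lies in the spectral subspace because $\a$ and $\t$ commute). The $H$--translation is indispensable: from $pM(\t,U)p=0$ alone one cannot reach a contradiction, and your closing step (``running the argument for an arbitrary nonzero $p$ shrinks the central support to zero'') has the quantifiers backwards --- the claim to be established is ``for each $E$ there is a nonzero element of $E\,M(\t,V)\,E$'', not ``there is a single $A$ surviving all $E$ simultaneously''.

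Third, the step you yourself flag as ``the main obstacle'' --- passing from the $m_H$--mean vanishing of graded commutators to an exact algebraic identity --- is where the substance of the proof lies, and you do not supply it. The paper's route is: assuming $E\a_h(A)E=0$ for all $h$ (with $A$ odd), one has $E\s(E)\a_h(A^*A)\s(E)E=E\s(E)\a_h(A^*)\{\a_{h}(A),\s(E)\}_\eeps E$; averaging with $m_H$ and using Cauchy--Schwarz together with \eqref{czlo} gives $\f\left(E\s(E)\ce_H(A^*A)\s(E)E\right)=0$, where $\ce_H$ and $\ce_G$ are the faithful normal conditional expectations onto $M^\a$ and $M^\t$ furnished by the Kovacs--Sz\"ucs theorem --- this is precisely where the invariant faithful normal state enters, and it is absent from your outline. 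Faithfulness of $\f$, then of $\ce_G$ together with $c(E)=\idd$, strips off $E$; repeating with $\s(E)$ (also of full central support, since $\t$ is even) strips off $\s(E)$; faithfulness of $\ce_H$ then yields $A=0$, the desired contradiction. Note finally that the paper works with the pair $E$, $\s(E)$ rather than with your decomposition $p=p_++p_-$ (where $p_-$ is not a projection); your parity bookkeeping is not wrong in principle, but the case split actually performed is on the spectral subspace ($M(\t,V)\bigcap M_+\neq\{0\}$ versus $M(\t,V)\subset M_-$), the even case reducing directly to Theorem 2 of \cite{HL}.
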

\begin{proof}
Fix $E\in Z(M^\t)$ with central support (in $M$) $c(E)=\idd$. We
notice that $\s(E)\in Z(M^\t)$ as  $\t$ is even. In addition,
$c(\s(E))=\idd$ too. Let $p\in\sp(\t)$ and fix a closed neighborhood
$V$ of $p$. Then there exists a nonzero element $A\in M(\t,V)$, the
last being the spectral subspace associated with the closed subset
$V\subset \hat G$ (cf. \cite{P}). If $M(\t,V)\bigcap M_+\neq\{0\}$,
we argue as in Theorem 2 of \cite{HL}, that there exists $h\in H$
such that $E\a_h(A)E\neq0$ where $A \in M(\t,V)\bigcap M_+$. In this
case $E\a_h(A)E\in M_E(\t^E,V)$, where $\t^E$ is the restricted
action of $\t$ on the reduced algebra $M_E$. If for some closed
neighborhood $V$ of $p$, $M(\t,V)\subset M_-$, we proceeds as
follows. Namely, fix a nonzero element $A\in M(\t,V)$ and suppose
that $E\a_h(A)E\neq0$ for some  $h\in H$. Then we conclude
that$E\a_h(A)E\in M_E(\t^E,V)$, where $\t^E$ is the restricted
action of $\t$ on the reduced algebra $M_E$. If on the other hand,
$E\a_h(A)E=0$ for each $h\in H$, then we get,
\begin{align}
\label{czlo1}
&E\s(E)\a_h(A^*A)\s(E)E=\s(E)E\a_h(A^*)E\a_h(A)E\s(E)\\
+E\s(E)&\a_h(A^*)\{\a_{h}(A),\s(E)\}_\eeps E=E\s(E)\a_h(A^*)\{\a_{h}(A),\s(E)\}_\eeps E\,.\nn
\end{align}
Let now $\ce_H:M\to M^\a$ (resp. $\ce_G:M\to M^\t$) be the normal faithful conditional expectation onto
the fixed point subalgebra $M^\a$ (resp. $M^\t$) leaving invariant the state $\f$, which exists by the Kovacs--Sz\"ucs Theorem (see e.g. \cite{BR1}, Proposition 4.3.8). We have by
Cauchy--Schwarz Inequality, Holder Inequality, \eqref{czlo} and
\eqref{czlo1},
\begin{align*}
%\label{czlo1}
&\f(E\s(E)\ce_H(A^*A)\s(E)E)=m_H\left\{\f(E\s(E)\a_h(A^*A)\s(E)E)\right\}\\
=&m_H\left\{\f(E\s(E)\a_h(A^*)\{\a_{h}(A),\s(E)\}_\eeps E)\right\}\\
\leq&\|A\|m_H\left\{\f(\{\a_{h}(A),\s(E)\}^*_\eeps\{\a_{h}(A),\s(E)\}^*_\eeps)\right\}^{1/2}=0\,.
\end{align*}
This implies that $E\s(E)\ce_H(A^*A)\s(E)E=0$ as $\f$ is faithful. Now,
$$
\ce_G(\s(E)\ce_H(A^*A)\s(E))E=\ce_G(E\s(E)\ce_H(A^*A)\s(E)E)=0
$$
as $E\in Z(M^\t)$. In addition, $\ce_G(\s(E)\ce_H(A^*A)\s(E))=0$ as
$c(E)=\idd$, which implies $\s(E)\ce_H(A^*A)\s(E)=0$ as $\ce_G$ is
faithful. By repeating the same argument  for
$\s(E)\ce_H(A^*A)\s(E)$ we get $\ce_H(A^*A)=0$, which implies the
contradiction $A=0$ as $\ce_H$ is faithful as well. Thus, either
when $M(\t,V)\bigcap M_+\neq\{0\}$ or $M(\t,V)\subset M_-$, if
$p\in\sp(\t)$ and $V$ is any closed neighborhood $V$ of $p$, there
always exists $A\in M$ such that $E\a_h(A)E\neq0$ and $E\a_h(A)E\in
M_E(\t^E,V)$. This means that $p\in\sp(\t^E)$ as well. As by
Proposition 1 of \cite{HL},
$$
\G_B(\t)=\bigcap_{\{E\in Z(M^\t)\mid c(E)=\idd\}} \sp(\t^E)\,,
$$
this leads to the assertion.
\end{proof}

\section{the type of von Neumann algebras associated to graded asymptotically abelian dynamical systems and Fermionic disordered models}
\label{dsaspt}

The present section is devoted investigate the type of von Neumann algebras associated to 
$\bz_2$--graded asymptotically Abelian dynamical systems. The natural application will concern the von Neumann algebras generated by temperature states of disordered Fermionic models. 

We start by proving that a von Neumann algebra with a nontrival
infinite semifinite summand cannot carry an action which is graded
asymptotically Abelian w.r.t. the strong operator topology. The proof for the
cases which include Fermionic systems is more involved than the
original one in \cite{D, L}. As usual $M$ will be a $\bz_2$--graded
von Neumann algebra whose grading is generated by an automorphism
$\s\in\aut(M)$ with $\s^2=\id$.
\begin{lem}
\label{ppprrr}
Let $M$ be a $\bz_2$--graded semifinite von Neumann algebra with a normal semifinite faithful trace $\t$.
Then there exists a selfadjoint projection $E\in M_+$ with $0<\t(E)<+\infty$.
\end{lem}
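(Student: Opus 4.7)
The plan is to produce $E$ as the join of a $\tau$--finite projection with its image under the grading automorphism $\s$, after first arranging the trace to be $\s$--invariant.

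First I would symmetrize the trace by setting $\tau':=\tau+\tau\circ\s$. Since $\s$ is a $*$--automorphism with $\s^2=\id$, the functional $\tau\circ\s$ is again a normal faithful trace on $M$, and hence so is $\tau'$; moreover $\tau'\circ\s=\tau'$. The only nontrivial point is semifiniteness of $\tau'$: given a nonzero projection $P\in M$, apply semifiniteness of $\tau$ once to find $0\neq Q\leq P$ with $\tau(Q)<+\infty$, then apply it again to $\s(Q)$ to obtain a nonzero subprojection $R\leq\s(Q)$ with $\tau(R)<+\infty$; using $\s^2=\id$, the projection $\s(R)\leq Q\leq P$ is nonzero and satisfies both $\tau(\s(R))\leq\tau(Q)<+\infty$ and $\tau(\s^2(R))=\tau(R)<+\infty$, whence $\tau'(\s(R))<+\infty$.

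Next, by semifiniteness of $\tau'$, I would pick a projection $F\in M$ with $0<\tau'(F)<+\infty$ and set $E:=F\vee\s(F)\in M$. Since $\s$ is a $*$--automorphism with $\s^2=\id$, one has $\s(E)=\s(F)\vee F=E$, so $E\in M_+$. Kaplansky's parallelogram identity combined with $\tau'$--invariance under $\s$ yields
$$
\tau'(E)=\tau'(F)+\tau'(\s(F))-\tau'(F\wedge\s(F))\leq 2\tau'(F)<+\infty\,,
$$
while evidently $\tau'(E)\geq\tau'(F)>0$. Because $E$ is even, $\tau'(E)=\tau(E)+\tau(\s(E))=2\tau(E)$, and therefore $0<\tau(E)<+\infty$, as required.

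The only real obstacle I anticipate is the semifiniteness of $\tau'$, since a priori $\tau\circ\s$ and $\tau$ need not share any common finite--trace projections---$\s$ could move a $\tau$--finite projection into the infinite part of $\tau$. The two--step descent above, which crucially exploits $\s^2=\id$, is what overcomes this difficulty; everything else reduces to standard facts about the projection lattice and the trace.
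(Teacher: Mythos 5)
Your proof is correct, but it follows a genuinely different route from the paper's. The paper starts from a projection $F$ with $0<\t(F)<+\infty$, splits it as $F=F_++F_-$ into even and odd parts, uses the projection identity to get $F_+=F_+^*F_++F_-^*F_-$ (hence $F_+$ is a nonzero positive contraction), and then extracts an even spectral projection $E(\l_0)$ of $F_+$ with $0<\t(E(\l_0))<+\infty$; no modification of the trace is made. You instead symmetrize the trace to $\t'=\t+\t\circ\s$ and take the manifestly even projection $E=F\vee\s(F)$, controlling $\t'(E)$ by Kaplansky's parallelogram law and the $\s$--invariance of $\t'$. A point in favour of your version: the paper's assertion that $\t(F_+)<+\infty$ silently uses $\t(\s(F))<+\infty$, i.e.\ essentially that $\t$ may be taken $\s$--invariant, which is exactly the issue your two--step descent (exploiting $\s^2=\id$) settles when proving semifiniteness of $\t'$; so your argument makes explicit a finiteness point that the paper's proof passes over. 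The paper's approach, on the other hand, avoids changing the trace and works entirely inside the support of the original finite--trace projection, and its spectral--projection step is the kind of argument that generalizes directly to positive elements rather than projections. Both arguments are elementary and both yield the lemma as stated.
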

\begin{proof}
Choose a selfadjoint projection $F\in M$ with $0<\t(F)<+\infty$, and
consider the splitting $F=F_++F_-$ of $F$ into even and odd parts.
As $F$ is a selfadjoint projection, we get $F_+=F_+^*F_++F_-^*F_-$.
This leads to $0<\t(F_+)<+\infty$, otherwise
$\t(F_+^*F_+)+\t(F_-^*F_-)=0$ which implies $F_+=F_-=0$. Let
$F_+=\int\l\di E(\l)$
be the resolution of the identity of $F_+$. By considering first the
approximation of continuous functions by polynomials in the uniform
topology, and then the pointwise approximation of Borel functions
with uniformly bounded continuous ones, we see that the projections
$E(\l)$ are even. In addition, the increasing function
$\l\mapsto\t(E(\l)))$ is the cumulative function associated with a
Borel measure on the interval $[0,1]$ such that
$$
\t(F_+)=\int\l\di\t(E(\l))\,.
$$
Then there exists $\l_0\in[0,1]$ such that $0<\t(E(\l_0))<+\infty$.
The projection we are looking for is $E:=E(\l_0)\in M_+$.
\end{proof}
Consider on $\bn$ a mean $m$ concentrated at the infinity of $\bn$.\footnote{The mean $m$ concentrated at infinity is uniquely determined by a state on the Corona Algebra. Namely,
$m\in\cs\left(B(\bn)/B_0(\bn)\right)$, where $B(\bn)$ is the $C^*$--algebra of all the bounded functions on
$\bn$, and $B_0(\bn)$ is made of those bounded functions vanishing at infinity.}
\begin{thm}
\label{btra}
Let $M$ be a $\bz_2$--graded von Neumann algebra. If for a mean $m$ concentrated at the infinity of $\bn$, and for a sequence $\{\a_n\}_{n\in\bn}\subset\aut(M)$ of even automorphisms,
\begin{equation}
\label{prrr1}
m\left\{\f([\a_n(A),B]^*[\a_n(A),B])\right\}=0\,,
\end{equation}
for each $A\in M$, $B\in M_+$ and $\f\in\cs(M)\bigcap M_*$,
then the properly infinite semifinite summand in $M$ is trivial.
\end{thm}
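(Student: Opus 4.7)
The plan is to argue by contradiction, following the strategy of \cite{D, L} for the non-graded case but modifying it to handle the restriction $B \in M_+$. Assume the properly infinite semifinite summand of $M$ is non-trivial, with central projection $z \in Z(M)$. Since $z$ is central, it is preserved by every automorphism of $M$ (in particular by each $\a_n$ and by $\sigma$), so the grading, the dynamics, and the hypothesis \eqref{prrr1} all restrict to $Mz$; replacing $M$ by $Mz$, I may assume $M$ is itself properly infinite semifinite. Fix a faithful normal semifinite trace $\tau$. By Lemma~\ref{ppprrr} there is an even projection $E \in M_+$ with $t := \tau(E) \in (0,\infty)$. Proper infiniteness gives $\idd - E \sim \idd$, hence the existence of an isometry $u \in M$ with $u^*u = \idd$ and $uu^* = \idd - E$; in particular $Eu = u^*E = 0$. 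Writing $u = u_+ + u_-$ and equating the even parts in the (even) identities $u^*u = \idd$ and $uu^* = \idd - E$ yields $u_+^* u_+ + u_-^* u_- = \idd$ and $u_+ u_+^* + u_- u_-^* = \idd - E$.

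Next, I would apply the hypothesis with $A = u_\pm$, $B = E$, and the normal state $\omega_E(X) := \tau(EXE)/t$. A direct expansion gives
\begin{equation*}
\omega_E\bigl([\a_n(u_\pm), E]^*[\a_n(u_\pm), E]\bigr) = \frac{\tau\bigl(E\a_n(u_\pm^* u_\pm)\bigr) - T_n^\pm}{t},
\end{equation*}
where $T_n^\pm := \tau\bigl(E\,\a_n(u_\pm^*)\,E\,\a_n(u_\pm)\bigr) \geq 0$. Summing over $\pm$ and using $u_+^* u_+ + u_-^* u_- = \idd$, \eqref{prrr1} forces $m\bigl\{(T_n^+ + T_n^-)/t\bigr\} = 1$. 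The trace Cauchy--Schwarz inequality bounds $|T_n^\pm|^2 \leq \tau\bigl(E\a_n(u_\pm^* u_\pm)\bigr)\,\tau\bigl(E\a_n(u_\pm u_\pm^*)\bigr)$; combining with the Cauchy--Schwarz inequality for sums together with $u_+ u_+^* + u_- u_-^* = \idd - E$ yields the key estimate $(T_n^+ + T_n^-)/t \leq \sqrt{1 - \tau(E\a_n(E))/t}$. Applying Jensen's inequality (concavity of $\sqrt{\,\cdot\,}$) to the mean $m$ then forces the asymptotic orthogonality $m\bigl\{\tau(E\a_n(E))/t\bigr\} = 0$.

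To close the argument, the same scheme applied with an arbitrary even finite projection $F \in M_+$ in place of $E$ (as the $B$ in \eqref{prrr1}, with $\phi = \omega_F$) delivers $m\bigl\{\tau(F\a_n(E))/\tau(F)\bigr\} = 0$ for every such $F$. After a further central-summand reduction making $\tau$ invariant under each $\a_n$ (so that $\tau(\a_n(E)) = t$ for all $n$, using that in each factor component $\tau\circ\a_n$ is proportional to $\tau$), I would invoke proper infiniteness together with Lemma~\ref{ppprrr} iterated on reduced algebras to build a countable pairwise orthogonal family of even finite projections $\{F_k\}$ summing strongly to $\idd$. A careful truncation/ultraweak-limit argument then confronts the identity $\sum_k \tau(F_k\a_n(E)) = \tau(\a_n(E)) = t$ with the $m$-nullness of every finite partial sum, producing the sought contradiction. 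The principal obstacle is exactly this closure step: the non-graded argument of \cite{D, L} admits a one-shot Cauchy--Schwarz bound by choosing $B$ to be a partial isometry implementing $\idd - E \sim E$, whereas the grading constraint $B \in M_+$ here forces the intermediate even/odd decomposition of $u$ above and the delicate propagation to every even finite $F$.
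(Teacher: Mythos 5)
Your opening reductions (restricting to the properly infinite semifinite summand, invoking Lemma~\ref{ppprrr} for an even finite--trace projection $E$) match the paper, and your computation leading to $m\{\tau(E\a_n(E))\}=0$ and then $m\{\tau(F\a_n(E))\}=0$ for even finite projections $F$ is arithmetically sound. But the proof does not close, and the step you yourself flag as ``the principal obstacle'' is a genuine gap, in two places. First, the ``further central--summand reduction making $\tau$ invariant under each $\a_n$'' is not available: on a $\ty{II_\infty}$ factor an automorphism satisfies $\tau\circ\a_n=\l_n\tau$ with $\l_n$ an arbitrary positive scalar (the module of $\a_n$), and no passage to a central summand removes this; in the non--factor case the $\a_n$ need not even act pointwise on the center, so the direct--integral bookkeeping you would need is not there. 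Second, and fatally, even granting $\tau(\a_n(E))=t$ for all $n$, a mean $m$ concentrated at infinity is only finitely additive: from $m\{\tau(F_k\a_n(E))\}=0$ for every $k$ you may conclude that every \emph{finite} partial sum is $m$--null, but this is perfectly consistent with $\sum_k\tau(F_k\a_n(E))=t$ for every $n$ --- take $g_k(n)=t\,\d_{k,n}$ as a model; the mass simply escapes to infinity in $k$ as $n\to\infty$. No contradiction follows, so the argument as written does not prove the theorem.

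The obstacle you attribute to the grading constraint $B\in M_+$ is illusory, and seeing this is exactly the point of the paper's proof. Normalize $\tau(E)=1$, set $\f:=\tau(E\,{\bf\cdot}\,E)\in\cs(M)\bigcap M_*$ and $\psi:=m\{\f\circ\a_n\}$. The hypothesis \eqref{prrr1} is only ever needed with the \emph{even} element $E$ in the second slot and an arbitrary $Y\in M$ in the first: Cauchy--Schwarz gives
$|m\{\f(\a_n(X)[\a_n(Y),E])\}|\leq\|X\|\,m\{\f([\a_n(Y),E]^*[\a_n(Y),E])\}^{1/2}=0$,
and inserting and removing $E$ inside $\tau(E\a_n(X)\a_n(Y)E)$ at the cost of such commutator terms, together with cyclicity of $\t$, yields $\psi(XY)=\psi(YX)$ for all $X,Y\in M$. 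Thus $\psi$ is a (possibly non--normal) tracial state on a properly infinite algebra, which is absurd ($1=\psi(\idd)=\psi(E_1)+\psi(E_2)=2$ for orthogonal projections $E_1,E_2$ each equivalent to $\idd$). In short: the grading enters only through Lemma~\ref{ppprrr}, after which the classical Driessler--Longo argument runs verbatim; your detour through the isometry $u=u_++u_-$ and the family $\{F_k\}$ is unnecessary and, as it stands, cannot be completed.
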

\begin{proof}
After using Lemma \ref{ppprrr}, the proof proceeds along the same
lines as those of the analogous results in \cite{D, L}. Indeed, let
$F$, $G$, $H$ be the central projections of $M$ corresponding to the
finite, infinite semifinite, and purely infinite part of $M$,
respectively. We have that all of them are invariant under the
action of $\s$ and the $\a_n$, otherwise they would not be maximal.
Thus, we can suppose that $M$ is infinite semifinite. Choose a
normal semifinite faithful trace $\t$ (cf. \cite{T1}, Theorem V.2.15)
and a even selfadjoint projection $E\in M_+$ with $\t(E)=1$, which
exists by Lemma \ref{ppprrr}. Consider on $M$ the state
$$
\f(A):=\t(EAE)\,,\quad A\in M\,.
$$
Define $\psi\in\cs(M)$ as
$$
\psi(A):=m\left\{\f\circ\a_n(A)\right\}\,,\quad A\in M\,.
$$
By taking into account \eqref{prrr1}, we have by the Cauchy--Schwarz Inequality (cf. \cite{T1},  Proposition I.9.5), and Holder inequality,
\begin{align}
\label{prrr2}
&\left|m\left\{\f\left(\a_{n}(A)[\a_{n}(B),E]\right)\right\}\right|\\
\leq\|A\|m&\left\{\f\left([\a_{n}(B),E]^*[\a_{n}(B),E]\right)\right\}^{1/2}=0\,.\nn
\end{align}
Thanks to \eqref{prrr2}, we compute
\begin{align*}
\psi(A&B)=m\left\{\t\left(E\a_{n}(A)\a_{n}(B)E\right)\right\}
=m\left\{\f\left(\a_{n}(A)[\a_{n}(B),E]\right)\right\}\\
+&m\left\{\t\left(E\a_{n}(A)E\a_{n}(B)E\right)\right\}
=m\left\{\t\left(E\a_{n}(B)E\a_{n}(A)E\right)\right\}\\
=&m\left\{\f\left(\a_{n}(B)[\a_{n}(A),E]\right)\right\}
+m\left\{\t\left(E\a_{n}(B)E\a_{n}(A)E\right)\right\}\\
=&m\left\{\t\left(E\a_{n}(B)\a_{n}(A)E\right)\right\}=\psi(BA)\,.
\end{align*}
Namely, $\psi$ is a (possibly non normal) tracial state on $M$ which is a contradiction.\footnote{Choose selfadjoint mutually orthogonal projections $E_j\in M$, $j=1,2$ equivalent to the identity $\idd$ such that
$E_1+E_2=\idd$, which always exist as $M$ is properly infinite. Then
$1=\psi(\idd)=\psi(E_1+E_2)=\psi(E_1)+\psi(E_2)=\psi(\idd)+\psi(\idd)=2$.}
\end{proof}

Consider a sequence $\{\a_{n}\}_{n\in\bn}$ of even $*$--automorphisms
of a $\bz^2$--graded $C^{*}$--algebra $\gb$, together with an even state $\f$ on $\gb$
which is invariant for $\{\a_{n}\}$. Let $\{U_{n}\}$, $V$ be the
covariant implementation of $\{\a_{n}\}$, and of the grading $\s$ relative to the GNS
triplet $(\pi_{\f},\ch_{\f},\Psi_{\f})$ corresponding to $\f$, respectively.
Denote by $\tilde\a_{n}:=\text{ad}U_{n}$,  $\tilde\s:=\text{ad}V$ the corresponding
automorphisms on $M:=\pi_{\f}(\gb)''$.
\begin{lem}
\label{modst} 
Suppose that $\Psi_{\f}$ is separating for
$\pi_{\f}(\gb)''$.\footnote{Such a condition is equivalent to the
fact that the state $(\,\cdot\,\Psi_{\f},\Psi_{\f})$ is a KMS state
(at inverse temperature 1) on $\pi_{\f}(\gb)''$ w.r.t. the modular
automorphism group constructed from $\Psi_{\f}$. In addition, the
previous conditions are also equivalent to the fact that the support
$s(\f)\in\gb^{**}$ of the state $\f$ in the bidual $\gb^{**}$ is
central, see e.g. \cite{SZ}, Section 10.17.} Then
\begin{equation*}
%\label{refe}
\lim_{n}\f\left(\{\a_{n}(A),B\}_\eeps^{*}\{\a_{n}(A),B\}_\eeps \right)=0
\end{equation*}
for every $A,B\in\gb$ implies
$$
\lim_{n}\{\tilde\a_{n}(X),Y\}_\eeps\xi
$$
for every $X,Y\in\pi_{\f}(\gb)''$ and $\xi\in\ch_{\f}$.
\end{lem}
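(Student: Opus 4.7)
The plan is to propagate the $L^2(\f)$-asymptotic graded commutation, known by hypothesis for elements of $\gb$ tested against $\Psi_\f$, to elements of $M:=\pi_\f(\gb)''$ of homogeneous parity tested against every $\xi\in\ch_\f$. I would proceed in three density steps.

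First I reduce the vector: since $\{\tilde\a_n(X),Y\}_\eeps\in M$ commutes with $M'$ and $\Psi_\f$ is cyclic for $M'$ (by the separating property), the identity $\{\tilde\a_n(X),Y\}_\eeps\,Y'\Psi_\f=Y'\{\tilde\a_n(X),Y\}_\eeps\Psi_\f$ for $Y'\in M'$, together with the uniform bound $\|\{\tilde\a_n(X),Y\}_\eeps\|\leq 2\|X\|\|Y\|$ and the density of $M'\Psi_\f$, shows that it suffices to prove $\|\{\tilde\a_n(X),Y\}_\eeps\Psi_\f\|\to 0$ for homogeneous $X,Y\in M$.

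Next I enlarge the first slot. A parity-preserving Kaplansky approximation $X_\l\in\pi_\f(\gb)_\pm$ with $\|X_\l\|\leq\|X\|$ and $X_\l\to X$ $*$-strongly is produced by combining Kaplansky's theorem with the parity projections $\tfrac{1}{2}(\id\pm\tilde\s)$, which are $*$-strongly continuous because $\tilde\s=\ad V$ for a unitary $V$ commuting with each $U_n$. Crucially, since $\tilde\a_n$ preserves the faithful normal state $\tilde\f$ defined by $\Psi_\f$, uniqueness of the modular structure forces $U_nJ=JU_n$ and $U_n\D=\D U_n$, so $U_nM'U_n^*=M'$. Hence for $\eta=Y'\Psi_\f\in M'\Psi_\f$,
\[
\tilde\a_n(X-X_\l)\eta = U_n(U_n^*Y'U_n)(X-X_\l)\Psi_\f,
\]
of norm $\leq\|Y'\|\|(X-X_\l)\Psi_\f\|\to 0$ uniformly in $n$; density of $M'\Psi_\f$ and the uniform bound $\|\tilde\a_n(X-X_\l)\|\leq 2\|X\|$ make $\sup_n\|\tilde\a_n(X-X_\l)\eta\|\to 0$ valid for every $\eta\in\ch_\f$. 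Expanding $\{\tilde\a_n(X-X_\l),Y\}_\eeps\Psi_\f=\tilde\a_n(X-X_\l)Y\Psi_\f-\eeps Y\tilde\a_n(X-X_\l)\Psi_\f$ and applying this estimate to the two summands (with $\eta=Y\Psi_\f$ and $\eta=\Psi_\f$ respectively), combined with the hypothesis for $X_\l$ and $Y$, delivers the conclusion for $X\in M$, $Y\in\pi_\f(\gb)$.

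The third step, enlarging the second slot to $M$, is where the main difficulty lies. Approximating $Y\in M_\pm$ $*$-strongly by $Y_\mu\in\pi_\f(\gb)_\pm$ with $\|Y_\mu\|\leq\|Y\|$, the decomposition
\[
\{\tilde\a_n(X),Y-Y_\mu\}_\eeps\Psi_\f = \tilde\a_n(X)(Y-Y_\mu)\Psi_\f - \eeps(Y-Y_\mu)\tilde\a_n(X)\Psi_\f
\]
has its first summand bounded by $\|X\|\|(Y-Y_\mu)\Psi_\f\|\to 0$ uniformly in $n$, but the second summand $\|(Y-Y_\mu)\tilde\a_n(X)\Psi_\f\|$ is delicate because the vector $\tilde\a_n(X)\Psi_\f$ moves with $n$. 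To control it, I approximate $X\in M_\pm$ by $\sigma^{\tilde\f}$-analytic elements $X^{(k)}\in M_\pm$ with $\|X^{(k)}\|\leq\|X\|$ (produced by Fej\'er averaging along the modular flow, staying in $M_\pm$ because $\sigma^{\tilde\f}$ commutes with $\tilde\s$). For such analytic $X^{(k)}$ the identity $X^{(k)}\Psi_\f=(X^{(k)})^\flat\Psi_\f$ with $(X^{(k)})^\flat:=J\sigma^{\tilde\f}_{-i/2}((X^{(k)})^*)J\in M'$, together with $JU_n=U_nJ$ and $U_n\D=\D U_n$, rewrites $\tilde\a_n(X^{(k)})\Psi_\f=(U_n(X^{(k)})^\flat U_n^*)\Psi_\f$ with $U_n(X^{(k)})^\flat U_n^*\in M'$ of norm $\|(X^{(k)})^\flat\|$; then $(Y-Y_\mu)$ commutes through, producing a bound by $\|(X^{(k)})^\flat\|\|(Y-Y_\mu)\Psi_\f\|$, which vanishes uniformly in $n$ as $\mu\to\infty$ for each fixed $k$. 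The residual $\|(Y-Y_\mu)\tilde\a_n(X-X^{(k)})\Psi_\f\|\leq 2\|Y\|\|(X-X^{(k)})\Psi_\f\|$ is uniform in $n$ and $\mu$ and vanishes as $k\to\infty$; taking the double limit $\lim_{k}\limsup_{\mu}$ completes the argument. The non-triviality is this last step: the tracial case is automatic via the norm-preserving conjugation $X\mapsto JX^*J\in M'$, whereas in general one must pass through analytic approximation along the modular flow.
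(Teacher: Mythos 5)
Your proof is correct, and its skeleton coincides with the paper's: reduce to the vector $\Psi_\f$ by cyclicity for $M'$, approximate homogeneous elements of $M$ by homogeneous elements of $\pi_\f(\gb)$ via Kaplansky density composed with the parity projections $\frac12(\id\pm\tilde\s)$, and use a commutant operator standing in for $X$ at the vector $\Psi_\f$ to tame the delicate term in which an operator that is small only at $\Psi_\f$ hits the moving vector $\tilde\a_n(X)\Psi_\f$. The one genuine divergence is how that commutant operator is produced. You manufacture it through Tomita--Takesaki theory: regularization along the modular flow to get analytic $X^{(k)}$, then $(X^{(k)})^\flat=J\s^{\tilde\f}_{-i/2}\big((X^{(k)})^*\big)J\in M'$ with $(X^{(k)})^\flat\Psi_\f=X^{(k)}\Psi_\f$, the uncontrolled norm $\|(X^{(k)})^\flat\|$ being absorbed by the iterated limit $\lim_k\limsup_\mu$. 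The paper gets the same object for free: since $\Psi_\f$ is separating for $M$ it is cyclic for $M'$, so one simply picks $X'\in M'$ with $\|(X-X')\Psi_\f\|<\eps$, and the unknown norm $\|X'\|$ is absorbed by choosing the Kaplansky approximant of $Y$ \emph{afterwards} with accuracy $\left(1\wedge1/\|X'\|\right)\eps$; this is also why the paper can run all approximations simultaneously in a single $4\eps$ estimate instead of your two sequential density steps. Your modular-theoretic detour is sound but heavier than needed, and likewise your appeal to $U_nJ=JU_n$, $U_n\D=\D U_n$ to get $U_nM'U_n^*=M'$ is superfluous: that identity follows from $U_nMU_n^*=M$ by taking commutants. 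Both arguments exploit $U_n\Psi_\f=\Psi_\f$ in the same essential way.
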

\begin{proof}
As $\Psi\equiv\Psi_{\f}$ is cyclic for $M'$, it is enough to show
that $n\to\infty$ implies $\{\tilde\a_{n}(X),Y\}_\eeps\Psi\to 0$.

Let $\eps>0$ and $X,Y\in M_{1}$ be fixed. Then there exist
$X',Y'\in M'\backslash\{0\}$ such that
$$
\|(X-X')\Psi\|<\eps,\quad \|(Y-Y')\Psi\|<\eps\,.
$$
Let $X,Y\in M_\pm$, we can find  $A,B\in\gb_\pm$
with $\|\pi_{\f}(A)\|\leq1$, $\|\pi_{\f}(B)\|\leq1$, such that
\begin{align*}
&\|(X-\pi_{\f}(A))\Psi\|<\left(1\wedge1/\|Y'\|\right)\eps\, ,\\
&\|(Y-\pi_{\f}(B))\Psi\|<\left(1\wedge1/\|X'\|\right)\eps\, .
\end{align*}
Indeed, for simplicity let $X\in M_-$ (the situation $X\in M_+$
follows analogously). Let $P$ be the projection of $M$ onto $M_-$.
By the Kaplansky Density Theorem, there exists $C\in\gb$ such that
$$
\|(X-\pi_{\f}(C))\Psi\|<\left(1\wedge1/\|Y'\|\right)\eps\,.
$$
Take $B:=\frac{C-\s(C)}2$. By our assumptions, $X=P(X)$ and $\pi_{\f}(B)=P(\pi_{\f}(C))$. By taking into account the last, we get
\begin{align*}
&\|(X-\pi_{\f}(B))\Psi\|=\|(P(X-\pi_{\f}(B)))\Psi\|\\
=&\frac12\|(X-\pi_{\f}(C))\Psi-V(X-\pi_{\f}(C))\Psi\|\\
\leq&\|(X-\pi_{\f}(C))\Psi\|<\left(1\wedge1/\|Y'\|\right)\eps\,.
\end{align*}
We treat the situation $X,Y\in M_-$, the other cases being analogous.
\begin{align*}
\|(\{\tilde\a_{n}(X),Y\}-\pi_{\f}(\{\tilde\a_{n}(A),B\}))\Psi\|
\leq&\|(\tilde\a_{n}(X)Y-\tilde\a_{n}(\pi_{\f}(A))\pi_{\f}(B))\Psi\|\\
+&\|(Y\tilde\a_{n}(X)-\pi_{\f}(B)\tilde\a_{n}(\pi_{\f}(A)))\Psi\|\,.
\end{align*}
As both the terms of the r.h.s. of the above inequality is
estimated in the same way, we consider only the first one. We get
\begin{align*}
&\|\left(X\tilde\a_{n}(Y)-\pi_{\f}(A)\tilde\a_{n}(\pi_{\f}(B))\right)\Psi\|
\leq\|\tilde\a_{n}(X-\pi_{\f}(A))(Y-Y')\Psi\|\\
+&\|Y'U_n(X-\pi_{\f}(A))\Psi\|
+\|\pi_{\f}(\a_{n}(A))(Y-\pi_{\f}(B))\Psi\|\\
<&2\eps+\|Y'\|\left(1\wedge1/\|Y'\|\right)\eps+\left(1\wedge1/\|X'\|\right)\eps
\leq4\eps
\end{align*}
which leads to the assertion.
\end{proof}
As a direct consequence, we have the following result describing the
structure of the von Neumann algebras generated by GNS
representations associated with  a $\bz_2$--graded asymptotically
Abelian state
%on $C^{*}$--algebras
such that its support in the bidual is central.
Such a result can be applied
immediately to the model under consideration, and yet wider
applications are possible.
\begin{thm}
\label{reff}
Let $(\gb,\a,\f)$ be a $C^*$--dynamical system, with $\gb$ a $\bz_2$--graded
$C^*$--algebra, $\a$ an even action of $\bz^d$, and finally $\f$ an even state which is invariant under the action of $\a$.
Suppose that $\ch_{\f}$ is a separable Hilbert space and $\Psi_{\f}$ is separating for
$\pi_{\f}(\gb)''$, $(\pi_{\f},\ch_{\f},\Psi_{\f})$ being the GNS triplet relative to $\f$.
If
$$
\lim_{|x|\to+\infty}\f\left(\{\a_{x}(A),B\}_\eeps^{*}\{\a_{x}(A),B\}_\eeps \right)=0\,,
$$
then
$\pi_{\f}(\gb)''$ does not contain type $\ty{I_{\infty}}$,
$\ty{II_{\infty}}$ and $\ty{III_{0}}$ components.
\end{thm}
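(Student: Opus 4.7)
The strategy is to combine Theorem~\ref{btra}, applied to the lifted translation action, with Theorem~\ref{lng}, applied to the modular automorphism group of the cyclic vector, so as to kill respectively the properly infinite semifinite summand and the type $\ty{III_0}$ summand of $M:=\pi_\f(\gb)''$. To set up, note that $\Psi_\f$ is cyclic by the GNS construction and separating by hypothesis, so $M$ acts in standard form on $\ch_\f$ and the vector state (still denoted $\f$) is faithful and normal; let $\s^\f$ denote its modular group. Evenness and $\a$--invariance of $\f$ force the implementers to fix $\Psi_\f$, so the lifts $\tilde\a:=\ad\,U$ and $\tilde\s:=\ad\,V$ are mutually commuting even automorphisms of $M$ preserving $\f$, hence commuting with $\s^\f$. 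Lemma~\ref{modst}, whose hypothesis we have precisely because $\Psi_\f$ is separating, upgrades the graded asymptotic abelianness of $\f$ on $\gb$ to the strong--operator null convergence $\|\{\tilde\a_x(X),Y\}_\eeps\xi\|\to0$ as $|x|\to+\infty$, for all $X,Y\in M_\pm$ and all $\xi\in\ch_\f$.

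For $\ty{I_\infty}$ and $\ty{II_\infty}$ the key observation is that for $Y\in M_+$ the graded commutator coincides with the ordinary one. As every normal state on $M$ in standard form is a vector state, the null convergence $\|[\tilde\a_{x_n}(X),Y]\eta\|^2\to0$ along any sequence $|x_n|\to+\infty$ supplies the hypothesis of Theorem~\ref{btra} with any mean concentrated at infinity. Theorem~\ref{btra} then forces the properly infinite semifinite summand of $M$ to vanish.

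For $\ty{III_0}$ I apply Theorem~\ref{lng} to the commuting pair of even $W^*$--dynamical systems $(M,\br,\s^\f)$ and $(M,\bz^d,\tilde\a)$, both preserving $\f$; the mean condition \eqref{czlo} follows once more from the strong convergence of Lemma~\ref{modst} combined with any Banach limit on $\bz^d$. The output is $\G_B(\s^\f)=\sp(\s^\f)$, a closed subgroup of $\br$ by Borchers' theorem. If $p\in Z(M)$ were the central projection onto a non--trivial type $\ty{III_0}$ summand, then $p$ would be fixed by $\s^\f$ (modular groups act trivially on the centre) and invariant under $\tilde\a$ and $\tilde\s$ (automorphisms preserve the type decomposition, and $p$ is maximal among central projections of that type), so the whole conclusion would restrict to $pM$, giving $\G_B(\s^\f|_{pM})=\sp(\s^\f|_{pM})$. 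The modular action being integrable, the Olesen--Pedersen identity $\G=\G_B$ would apply, yet standard Connes theory forces $\G(\s^\f|_{pM})=\{0\}$ for any type $\ty{III_0}$ von Neumann algebra. Consequently $\s^\f|_{pM}$ would be trivial, forcing $pM$ to be semifinite and contradicting its type $\ty{III}$ character.

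The main technical hurdle is this last transfer: the factor--level characterisation of $\ty{III_0}$ via a trivial Connes invariant, and the identity $\G=\G_B$ for modular actions, have to be invoked on the possibly non--factorial reduction $pM$. Both are off--the--shelf consequences of Connes' classification once integrability of the modular action is used, so the genuine new input is the graded spectral Theorem~\ref{lng}, already established in Section~\ref{asgab}.
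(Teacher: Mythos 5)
Your proposal follows the same architecture as the paper's proof: Lemma \ref{modst} lifts the graded asymptotic abelianness of $\f$ to strong convergence of graded commutators in $\pi_{\f}(\gb)''$; Theorem \ref{btra} (using that the graded commutator against an even element is the ordinary commutator) kills the properly infinite semifinite summand; and Theorem \ref{lng}, applied to the modular group of the cyclic vector together with the lifted translations and a Cesaro mean, gives $\G_B=\sp$ for the modular action, which is then played against the classification of the $\ty{III_{0}}$ summand. The one point where you diverge --- and where your argument is not sound as written --- is the justification that the Borchers invariant vanishes on the $\ty{III_{0}}$ summand. You pass through ``the modular action is integrable, hence $\G=\G_B$ by Olesen--Pedersen''; but the modular automorphism group of a faithful normal \emph{state} is not integrable in general, so this step is unjustified, and the a priori inclusion $\G\subseteq\G_B$ goes the wrong way (you need the \emph{larger} invariant $\G_B$ to vanish, not $\G$). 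The paper avoids the issue entirely: it invokes Proposition C1 of \cite{BF}, namely $\e(\G_B(M))={\rm S}(M)\setminus\{0\}$, valid for non--factors, together with ${\rm S}(M)=\{0,1\}$ for a type $\ty{III_{0}}$ algebra, to conclude $\G_B(M)=\{0\}$, hence $\sp(\s^{\tilde\f})=\{0\}$ and (since $0$ cannot be an isolated point of $\s(\D_{\tilde\f})$) that $\tilde\f$ is a trace --- contradicting type $\ty{III}$. Substituting that identity for your integrability appeal closes the gap; everything else in your write--up matches the paper's proof.
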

\begin{proof}
Let $\tilde\f:=\langle\,{\bf\cdot}\,\Psi_\f,\Psi_\f\rangle$. We start by noticing that
$\G_B(M)=\G_B(\s^{\tilde\f})$ (cf. \cite{HL}, Proposition 1) as the last does not depend on the faithful state $\tilde\f$ on
$M$. By Proposition C1 of \cite{BF}, $\e(\G_B(M))=\rm{S}(M)\backslash\{0\}$, $\rm{S}(M)$ being Connes $\rm{S}$--invariant (cf. \cite{C}). Finally, $\sp(\s^{\tilde\f})=\ln\s(\D_{\tilde\f})$.

Let $E$ be the central projection corresponding to the type
$\ty{III_{0}}$ component of $\pi_{\f}(\gb)''\equiv M$ which is
well--defined as $M$ is acting on a separable Hilbert space, see
\cite{S1}. Assume $E>0$. As for $x\in\bz^d$, $\tilde\a_x(E)=E$, and
$\tilde\s(E)=E$, we can suppose that $E=\idd$, that is $M$ is itself
of type $\ty{III_{0}}$. As $M$ is supposed of type $\ty{III_{0}}$,
we get $\rm{S}(M)=\{0,1\}$. By considering the Cesaro mean as that
described in \eqref{wcf}, we obtain by Lemma \ref{modst} and Theorem
\ref{lng}, $\G_B(M)=\sp(\s^{\tilde\f})$. It readily follows from
these results that,
\begin{align*}
&\s(\D_{\tilde\f})\backslash\{0\}=\e(\sp(\s^{\tilde\f}))=\e(\G_B(\s^{\tilde\f}))\\
=&\e(\G_B(M))=\rm{S}(M)\backslash\{0\}=\{0,1\}\backslash\{0\}\,.
\end{align*}
This means that $\s(\D_{\tilde\f})=\{1\}$ as $0$ cannot be an
isolated point of the spectrum. Thus we have arrived at the
contradiction that $\tilde\f$ is a trace. Hence, $M$ cannot contain
the type $\ty{III_{0}}$ component. The proof follows as the infinite
semifinite part is avoided by the application of Theorem \ref{btra},
taking into account Lemma \ref{modst}.
\end{proof}

As a direct consequence of the previous results on the spectral properties, we show that the temperature states of the disordered systems under consideration can generate only type $\ty{III}$ von Neumann algebras, except the type $\ty{III_0}$.
\begin{lem}
\label{simfind}
Let $\gb$ an infinite dimensional simple separable $C^*$--algebra together with its representation $\pi$.
Then $\pi(\gb)''$ does not contain the type $\ty{I_{\mathop{fin}}}$ component.
\end{lem}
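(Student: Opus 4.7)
The plan is to argue by contradiction. Suppose $\pi(\gb)''$ has a nonzero type $\ty{I_{\mathop{fin}}}$ component. Then there exists a nonzero central projection $P\in\pi(\gb)''$ such that $\pi(\gb)''P$ is of type $\ty{I_{\mathop{fin}}}$. Decomposing this summand along its homogeneous pieces of type $\ty{I_n}$ with $n<+\infty$, we can replace $P$ by a smaller nonzero central projection so that $N:=\pi(\gb)''P$ is homogeneous of type $\ty{I_n}$ for a single integer $n<+\infty$. Since the ambient Hilbert space is separable, the standard structure theorem for homogeneous type $\ty{I}$ algebras realizes $N\cong M_n(\bc)\overline{\otimes}L^{\infty}(X,\m)$ for some standard probability space $(X,\m)$, i.e. a direct integral of copies of $M_n(\bc)$ over $X$.

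Consider the $*$--homomorphism $\rho:\gb\to N$ given by $\rho(b):=\pi(b)P$. Choose a countable, norm--dense $*$--subalgebra $\gb_{0}\subset\gb$. For each $b\in\gb_{0}$, the element $\rho(b)\in N$ is represented by a measurable, essentially bounded $M_n(\bc)$--valued function $x\mapsto\rho_x(b)$. Taking the intersection of the countably many null sets associated with the algebraic and $*$--relations among the elements of $\gb_{0}$, we obtain a full--measure set $X_{0}\subset X$ on which $b\in\gb_{0}\mapsto\rho_x(b)\in M_n(\bc)$ is a genuine $*$--homomorphism. By the automatic contractivity $\|\rho_x(b)\|\leq\|b\|$ of any $*$--homomorphism between $C^{*}$--algebras, this extends uniquely by continuity to a $*$--homomorphism $\rho_x:\gb\to M_n(\bc)$ for every $x\in X_{0}$.

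Now simplicity of $\gb$ forces each $\rho_x$ to be either zero or injective. Since $\gb$ is infinite--dimensional while $M_n(\bc)$ is finite--dimensional, injectivity is impossible, so $\rho_x\equiv 0$ for every $x\in X_{0}$. Consequently $\rho(b)=0$ for every $b\in\gb_{0}$, and by continuity of $\rho$ also for every $b\in\gb$. This says $\pi(\gb)P=0$. Taking weak closure on the left and using that $P$ is central, we conclude $\pi(\gb)''P=0$, so in particular $P=P\cdot P=0$, contradicting the choice of $P$.

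The main technical point to watch is the measurable--selection step in the second paragraph: a priori the pointwise evaluations $\rho_x(b)$ are defined only modulo $b$--dependent null sets, so one must isolate a single full--measure set on which all these evaluations simultaneously assemble into a coherent $*$--homomorphism defined on all of $\gb$. Working with a countable dense $*$--subalgebra and invoking the built--in contractivity of $C^{*}$--morphisms to extend by continuity is the standard device that avoids resorting to heavier disintegration machinery.
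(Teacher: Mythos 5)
Your argument is correct and follows essentially the same route as the paper's proof: reduce to a homogeneous type $\ty{I_{n}}$ summand, disintegrate over the centre into fibres valued in $\bm_n(\bc)$, and let simplicity together with infinite--dimensionality kill each fibre (the paper phrases this as the fibre representations being faithful almost surely, you as them being identically zero, which amounts to the same contradiction). The one point to repair is the appeal to ``automatic contractivity of $*$--homomorphisms between $C^{*}$--algebras'' for the fibre maps on the dense $*$--subalgebra $\gb_0$, which is not complete and hence not a $C^{*}$--algebra; the needed bound $\|\rho_x(b)\|\leq\|b\|$ for almost every $x$ instead comes directly from $\esssup_x\|\rho_x(b)\|=\|\rho(b)\|\leq\|b\|$, intersected over the countably many $b\in\gb_0$.
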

\begin{proof}
Consider the central projection $E_n$ relative to the $\ty{I_{n}}$
component, $n\in\bn$, of $\pi(\gb)''$, which exists by \cite{N}. By
considering the representation $\pi_n:=\pi(\,{\bf\cdot}\,)E_n$, we
can assume that $\pi$ itself contains only the type  $\ty{I_{n}}$
component. Consider the direct integral decomposition
$\pi=\int\pi_x\di\n$ of $\pi$ w.r.t. the center of $\pi(\gb)''$. We
get that $\pi_x(\gb)''$ is isomorphic to the full matrix algebra
$\bm_n(\bc)$, $\n$--almost surely. But this is impossible as $\pi_x$
is faithful almost surely as $\gb$ is simple, and then
$\pi_x(\gb)''$ cannot be $\bm_n(\bc)$ as $\gb$ is infinite
dimensional. The proof follows as $n$ is arbitrary and
$E_{\text{fin}}=\bigoplus_{n\in\bn}E_n$, $E_{\text{fin}}$ being the
central projection corresponding to the finite component of
$\pi(\gb)''$.
\end{proof}
\begin{thm}
\label{c6}
Let $\ga=\ca\otimes L^{\infty}(\Om,\m)$, and $\f\in\cs_{NI}(\ga)$ be a KMS state 
 at inverse temperature $\b\neq0$ w.r.t. the time evolution which is supposed to be nontrivial. Suppose that  $\ca$ is separable, simple and graded asymptotically Abelian.
Then only type $\ty{III_{\l}}$ factors, $\l\in(0,1]$, can
appear in its central decomposition.

If in addition,
$\gz_{\pi_{\f}}\sim L^{\infty}(\Om,\m)$, then there exists a
unique $\l\in(0,1]$ such that $\pi_{\f_{\om}}(\ca)''$ are type
$\ty{III_{\l}}$ factors almost surely.
\end{thm}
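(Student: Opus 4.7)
Set $M:=\pi_\f(\ga)''$. The goal in (a) is to exclude every summand of $M$ other than type $\ty{III_\l}$ with $\l\in(0,1]$; part (b) will then be essentially an ergodicity argument. The plan for (a) is to peel off the bad summands in two stages: Theorem~\ref{reff} kills the $\ty{I_\infty}$, $\ty{II_\infty}$ and $\ty{III_0}$ parts, while the finite summands ($\ty{I_{\mathop{fin}}}$ and $\ty{II_1}$) are killed by a spectral argument in which the nontriviality of the time evolution and the simplicity of $\ca$ enter in an essential way.

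First I check the hypotheses of Theorem~\ref{reff} for $(\ga,\gpa,\f)$: $\f$ is even (Proposition~\ref{folk}) and graded asymptotically Abelian (Proposition~\ref{b2}), and $\Psi_\f$ is separating for $M$ because $\f$ is KMS. Hence $M$ has no $\ty{I_\infty}$, $\ty{II_\infty}$ or $\ty{III_0}$ part. Let $E\in\gz(M)$ be the central projection onto the remaining finite part $\ty{I_{\mathop{fin}}}\oplus\ty{II_1}$; $E$ is fixed by $\tilde\s$ and by $\tilde\gpt$ because any $*$-automorphism preserves the type decomposition. I assume $E\neq 0$ and derive a contradiction. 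Since $M_E$ is semifinite, the identity $\e(\G_B(N))=\rm{S}(N)\setminus\{0\}$ used in the proof of Theorem~\ref{reff} gives $\G_B(M_E)=\{0\}$. On the other hand, Lemma~\ref{modst} applied to $(\ga,\gpa,\f)$ shows that $\{\tilde\gpa_x(A),B\}_{\eeps}\to 0$ strongly on $\ch_\f$ for $A,B\in M_\pm$, and after Cesaro averaging this is precisely hypothesis \eqref{czlo} of Theorem~\ref{lng} on $M_E$ for the commuting even actions $\sigma^{\tilde\f}|_{M_E}$ and $\tilde\gpa|_{M_E}$. Theorem~\ref{lng} then yields
\[
\sp(\sigma^{\tilde\f}|_{M_E})=\G_B(\sigma^{\tilde\f}|_{M_E})=\G_B(M_E)=\{0\},
\]
so $\sigma^{\tilde\f}|_{M_E}=\id$ and, by the KMS relation \eqref{modgns}, $\tilde\gpt|_{M_E}=\id$.

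The remaining (and most delicate) step is the descent to $\ca$. Write $E=\int^\oplus_\Om E_\om\,d\n(\om)$ with $E_\om\in\gz(M_\om)$ along the decomposition \eqref{a1}; triviality of $\tilde\gpt|_{M_E}$ translates to $\pi_\om(\tau^\om_t(A)-A)E_\om=0$ for every $A\in\ca$ and $t\in\br$, almost surely on $\{E_\om\neq 0\}$. For such $\om$, the closed two-sided ideal $\{B\in\ca:\pi_\om(B)E_\om=0\}$ of $\ca$ cannot equal $\ca$ (else $E_\om=\pi_\om(\idd)E_\om=0$), so by simplicity of $\ca$ it must be $\{0\}$; hence $\tau^\om=\id$. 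However, by Theorem~\ref{ba} and the ergodicity of $\{T_x\}$, $\sp(\tau^\om)$ is almost surely equal to a fixed $\S\subset\br$, and the nontriviality of $\gpt$ combined with the identity $\gpt_t(A)(\om)=\tau^\om_t(A(\om))$ forces $\S\neq\{0\}$, whence $\tau^\om\neq\id$ almost surely. Therefore $\{E_\om\neq 0\}$ is null, $E=0$, and (a) is proved. Part (b) is then almost immediate: the hypothesis $\gz_{\pi_\f}\sim L^\infty(\Om,\m)$ makes the subcentral decomposition \eqref{a1} coincide with the central decomposition of $M$, so $M_\om=\pi_{\f_\om}(\ca)''$ is a.s.\ a factor and, by (a), of type $\ty{III_{\l_\om}}$ with $\l_\om\in(0,1]$; Theorem~\ref{c1}(i) gives $\f_{T_{-x}\om}=\f_\om\circ\a_x$, which induces a von Neumann algebra isomorphism $M_{T_{-x}\om}\cong M_\om$, so $\l_\om$ is $T$-invariant and, by ergodicity, a.s.\ equal to a single $\l$. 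The hardest step is the descent $\tilde\gpt|_{M_E}=\id\Rightarrow\tau^\om=\id$, where the simplicity of $\ca$, the fibrewise disintegration of $E$, and the ergodic input from Theorem~\ref{ba} have to be combined.
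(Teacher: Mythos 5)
Your proposal is correct in substance and shares the paper's overall skeleton (Theorem~\ref{reff} to remove the $\ty{I_{\infty}}$, $\ty{II_{\infty}}$ and $\ty{III_{0}}$ parts; the spectral machinery of Theorems~\ref{ba} and \ref{lng} together with simplicity of $\ca$ and nontriviality of $\gpt$ to remove the finite part), but it diverges from the paper in two genuine ways. First, the paper splits the finite summand: the $\ty{I_{\mathop{fin}}}$ part is excluded by a purely representation--theoretic argument (Lemma~\ref{simfind}: a simple infinite--dimensional separable $C^*$--algebra admits no type $\ty{I_{n}}$ representations), and only the $\ty{II_{1}}$ part is attacked spectrally, by passing to the normal KMS state $\f_E$ and running the chain $\sp(\gpt)=\sp(\t^{\om})=\sp(\tilde\t^{\om})=\sp(\tilde\gpt)=\G_B(\tilde\gpt)=\G_B(\pi_\f(\ga)'')=\{0\}$. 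You instead treat $\ty{I_{\mathop{fin}}}\oplus\ty{II_{1}}$ in one stroke on the reduced algebra $M_E$, deduce $\tilde\gpt\lceil_{M_E}=\id$ from $\G_B(M_E)=\{0\}$ via \eqref{modgns}, and then descend to the fibres by the ideal argument $\ker\left(\pi_\om(\,{\bf\cdot}\,)E_\om\right)=\{0\}$; this is a valid and in fact slightly more economical variant (it does not need Lemma~\ref{simfind}, hence not the infinite--dimensionality of $\ca$ that that lemma requires), and your use of Theorem~\ref{ba} plus ergodicity to get $\t^\om\neq\id$ almost surely plays exactly the role of Proposition C2 of \cite{BF} in the paper. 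Second, for the last assertion the paper argues via the single global invariant, $\G_B(M)=\G(M_\om)$ almost surely once the fibres are factors, which pins down $\l$ immediately; you argue via the equivariance $\f_\om\circ\a_x=\f_{T_{-x}\om}$ of Theorem~\ref{c1} and ergodicity of $T$.

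The one step you should shore up is precisely that last ergodicity argument: to conclude that the $T$--invariant assignment $\om\mapsto\l_\om$ is almost surely constant you need it to be \emph{measurable} (equivalently, that $\{\om\,:\,M_\om\ \text{is of type}\ \ty{III_{\l}}\}$ is measurable for each $\l$). This is true — it is part of the measurable--selection theory for Connes' classification in direct integrals (cf. \cite{S1}, which the paper already invokes for the measurability of the type $\ty{III_0}$ central projection) — but it is not free, and your proposal uses it silently. The paper's route through $\G_B(M)=\G(M_\om)$ sidesteps the issue entirely, since $\G_B(M)$ is a single fixed closed subgroup computed from the global algebra. Either patch the gap by citing the measurability of the fibrewise $\G$--invariant, or switch to the paper's argument for this final step.
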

\begin{proof}
Let $\pi=\int^{\oplus}_{\Om}\pi_{\om}\di\m(\om)$ be the direct
integral decomposition of $\pi$ as explained in Section
\ref{sssttt}. By taking into account \eqref{reaa}, $\pi_{\om}$ is
indeed a representation of $\ca$, and
$\pi_{\om}(\ga)''=\pi_{\om}(\ca)''$ almost surely. By Lemma
\ref{simfind}, we conclude that $\pi_{\om}(\ga)''$ cannot contain
the type $\ty{I_{\text{fin}}}$ component, almost surely. This means
that $\pi(\ga)''$ does not contain the type $\ty{I_{\text{fin}}}$
component. By Proposition \ref{b2} and Theorem \ref{reff}, the type
$\ty{I_{\infty}}$, $\ty{II_{\infty}}$ and $\ty{III_{0}}$ are also
absent.

Concerning the type $\ty{II_{1}}$ component, let
$E\in\gz_{\pi_{\f}}$ be the corresponding central projection which
we assume to be non zero. By proposition 3.1 of \cite{BF}, the state
$$
\f_E:=\frac{\langle\pi_\f(\,{\bf\cdot}\,)E\F,\F\rangle}{\langle E\F,\F\rangle}
$$
is a KMS state which is normal w.r.t. $\f$. This means that
$\f_E\lceil_{L^{\infty}(\Om,\m)}$ is normal. In addition,
$\f_E\in\cs_{NI}(\ga)$ as $VEV^*=E$ for each
$V\in\cn(\pi_{\f_\om}(\ga)'')$, $\cn(\pi_{\om}(\ga)'')$ being the
normalizer of $\pi_{\f}(\ga)''$ in $\cb(\ch_\f)$. Thus, we assume
without loss of  generality that $\pi_{\f}(\ga)''$ is a type
$\ty{II_{1}}$ von Neumann algebra. Denote as usual
$\tilde\t^\om_t:=\ad U_\om(t)$ and  $\tilde\gpt_t:=\ad U(t)$ on
$\pi_{\f_\om}(\ga)''$ and  $\pi_\f(\ga)''$, respectively. By
applying Proposition C2 of \cite{BF}, theorems \ref{ba} and
\ref{lng}, we get
$$
\sp(\gpt)=\sp(\t^\om)=\sp(\tilde\t^\om)=\sp(\tilde\gpt)=\G_B(\tilde\gpt)=\G_B(\pi_\f(\ga)'')=0\,,
$$
where the first three equalities hold true almost surely. But this is a contradiction as $\gpt$ is supposed to be non trivial, see e.g. \cite{Su}, propositions 3.2.8 and 3.2.9.

If $\gz_{\pi_{\f}}\sim L^{\infty}(\Om,\m)$, then
$M_{\om}\equiv\pi_{\f_{\om}}(\ca)''$ are factors almost surely. This
implies $\G_{B}(M)=\G(M_{\om})$ almost surely, where $\G$ is Connes
$\G$--invariant (\cite{C}). The theorem follows from the previous
part.
\end{proof}

\section{a concrete disordered fermionic model}
\label{ccoomoo}

We apply the previous results which are quite general in nature to a
pivotal model. In fact,
this model can be viewed as a disorderd spinless Hubbard model, provided the common distribution of the coupling constants is one--sides. The most interesting situation will be when such a common distribution is two--sides.
The associated nearest neighbor random Hamiltonian of this Fermionic
system we have in mind has the form
\begin{equation}
\label{ffham}
H=\sum_{\{(x,y)\in\bz^{d}\mid |x-y|=1\}}\left(J_{xy}c_{i}^{\dagger}c_{j} + h_{xy}n_{x} n_{y}\right) \,,
\end{equation}
together with all its local truncations
\begin{equation*}
%\label{ffham1}
H_\L:=\sum_{\{(x,y)\in\L\mid |x-y|=1\}}\left(J_{xy}c_{i}^{\dagger}c_{j} + h_{xy}n_{x} n_{y}\right) \,,
\end{equation*}
where the $c_{x}$ and $c_{x}^{\dagger}$ are Fermion annihilators and
creators on the $x$--th site with the associated number operator $n_{x}:=c_{x}^{\dagger}c_{x}$. The coupling constants $J_{xy}$ and the external magnetic fields
$h_{xy}$ are independent
random variables, and we suppose that the $J_{xy}$, as well as the $h_{xy}$, are identically distributed on a symmetric bounded interval of the real line according to the laws $J(s)$, $h(s)$, respectively. Denote $E\bz^d$ the edges of the standard lattice $\bz^d$. Our sample space
$(\Om,\m)$ for the pivotal model described above has the form
\begin{equation}
\label{saspa}
\Om=\prod_{(x,y)\in E\bz^d}(\supp J\times\supp h)\,,\quad
\di\m=\prod_{(x,y)\in E\bz^d}(J(\di s)\times h(\di s))\,.
\end{equation}
First of all notice that the shift $\a_x$by $x\in\bz^d$ acts in a canonical way on the measurable space $(\Om,\m)$ just by shifting the edges in the trajectories,
 $$
 \om=\{(x_\om, y_\om)\}\mapsto T_x\om=\{(x_\om+x, y_\om+x)\}\,.
 $$
We have
\begin{prop}
\label{prop71}
Under the above notations, $\bz^d$ acts on $(\Om,\m)$ by a measure preserving mixing transformations.
\end{prop}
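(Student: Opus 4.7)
The plan is to observe that $T_x$ acts on $\Om$ merely by permuting the product coordinates, then derive both conclusions from standard facts about Bernoulli shifts over $\bz^d$.

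First I would spell out the action on coordinates. Identifying $\om\in\Om$ with a family $\{\om_{(p,q)}\}_{(p,q)\in E\bz^d}$ of elements of $\supp J\times\supp h$, the map $(p,q)\mapsto(p+x,q+x)$ is a bijection of $E\bz^d$, and $T_x$ corresponds to the coordinate permutation $(T_x\om)_{(p,q)}=\om_{(p-x,q-x)}$. Since all factor measures $J(\di s)\times h(\di s)$ are mutually identical, the product measure $\m$ in \eqref{saspa} is invariant under any such permutation; hence $T_x$ preserves $\m$, and $x\mapsto T_x$ is a $\bz^d$--action by measure preserving transformations.

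For mixing, I would use that the $\bz^d$--action on $E\bz^d$ decomposes into exactly $d$ free orbits, one for each coordinate direction $e_1,\dots,e_d$. Picking for each direction $i$ the section $\{(p,p+e_i)\mid p\in\bz^d\}$ induces a $\bz^d$--equivariant identification $E\bz^d\cong\bigsqcup_{i=1}^d\bz^d$, under which the dynamical system $(\Om,\m,\{T_x\}_{x\in\bz^d})$ is isomorphic to the Bernoulli shift on $\bz^d$ with single--site space $(\supp J\times\supp h)^d$ and single--site distribution $(J(\di s)\times h(\di s))^{\otimes d}$. Bernoulli shifts over $\bz^d$ are mixing, which yields the assertion.

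To avoid invoking this as a black box, I would verify mixing directly on a generating family. Take cylinder sets $A, B\subset\Om$ depending only on coordinates in finite subsets $S_A,S_B\subset E\bz^d$. As soon as $|x|$ is large enough that $S_A\cap(S_B-x)=\emptyset$, the sets $A$ and $T_{-x}B$ depend on disjoint batches of coordinates, and by the product structure of $\m$,
$$
\m(A\cap T_{-x}B)=\m(A)\,\m(B)\,.
$$
Since finite cylinders form a multiplicative class generating the Borel $\s$--algebra of $\Om$, a standard $\pi$--$\l$/density argument extends the relation
$$
\lim_{|x|\to+\infty}\m(A\cap T_{-x}B)=\m(A)\,\m(B)
$$
to arbitrary Borel sets $A, B\subset\Om$, which is mixing.

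There is no real obstacle here: the only point requiring any care is the orbit decomposition of $E\bz^d$, which ensures that the system is genuinely Bernoulli (and not merely a non--free factor), and the routine cylinder--set computation that produces the mixing estimate.
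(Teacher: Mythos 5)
Your proposal is correct and its operative argument is essentially the paper's: the paper also proves mixing by reducing to objects depending on finitely many edge coordinates and using that for $|x|$ large the relevant coordinate sets are disjoint, so the product structure of $\m$ factorizes the correlation. The additional identification with a Bernoulli shift over $\bz^d$ via the orbit decomposition of $E\bz^d$ is a nice framing but not needed, since your direct cylinder-set computation already closes the argument.
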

\begin{proof}
As the involved measure $\m$ is a product of a single measure $J(\di s)\times h(\di s)$ and $T_x$ is a bijection of $\Om$, it preserves $\m$. To check the ergodic properties of such an action, it is enough to reduce the matter to the measurable functions depending only by a finite number of variables. Let $f,g$ be two of such functions. If $|x|$ is sufficiently big, $f$ and $g$ depend on different sets of variables $\L_f$, $\L_g+x$ in the space made of the edges of $\bz^d$. Then we get
\begin{align*}
&\int f\,g\circ T_x\di\m=\int f\,g\circ T_x\di\m_{\L_f}\times\di\m_{\L_g+x}
=\int f\di\m_{\L_f}\int g\circ T_x\di\m_{\L_g+x}\\
&=\int f\di\m\int g\circ T_x\di\m
=\int f\di\m\int g\di\m\,.
\end{align*}
\end{proof}
Concerning the other useful regularity properties of the pivotal model described above, we need the following
\begin{lem}
\label{eehheh}
If $A\in\carf(\L)$ is localized in the bounded region $\L$, then
$f_{A,t}\in\carf(\bar\L)\otimes L^{\infty}(\Om,\m)$, where $\bar\L:=\{x\in\bz^d\mid\dist(x,\L)\leq1\}$.
\end{lem}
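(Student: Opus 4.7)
The plan is to express $f_{A,t}(\om)=\t^{\om}_t(A)$ as the norm limit of finite-volume approximations built from the truncated Hamiltonians $H_{\L'}(\om)$, and to track both the spatial localization of the approximants and their measurable dependence on $\om$. For any finite $\L'\supseteq\bar\L$, set $\t^{\om,\L'}_t(A):=e^{itH_{\L'}(\om)}Ae^{-itH_{\L'}(\om)}$. Since $H_{\L'}(\om)\in\carf(\L')$ and $A\in\carf(\L)\subseteq\carf(\L')$, this truncated evolution takes values in $\carf(\L')$. Moreover, $H_{\L'}(\om)$ is a polynomial with random coefficients in the \emph{finite} family $(J_{xy},h_{xy})_{(x,y)\in E\L'}$, which are measurable projections of $\om$ on the product space \eqref{saspa}; hence $\om\mapsto\t^{\om,\L'}_t(A)$ is measurable into the separable $C^*$--algebra $\carf(\L')$ and defines an element of $\carf(\L')\otimes L^{\infty}(\Om,\m)$.

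Next I would invoke the thermodynamic limit for the CAR dynamics established in \cite{AM1}: the nearest-neighbour, uniformly bounded, even interaction \eqref{ffham} generates a norm-convergent limit $\t^{\om,\L'}_t(A)\to\t^{\om}_t(A)$ as $\L'\nearrow\bz^d$, uniformly on compact time intervals. Because $\supp J$ and $\supp h$ are bounded, the convergence is uniform in $\om$ as well, so $f_{A,t}$ is the uniform limit of the measurable functions $f^{\L'}_{A,t}:\om\mapsto\t^{\om,\L'}_t(A)$; in particular $f_{A,t}$ is measurable and lies in the norm closure $\bigcup_{\L'}\carf(\L')\otimes L^{\infty}(\Om,\m)$.

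The crucial point, and the main obstacle, is to replace the generic $\L'$ in the previous step by the sharp one-step enlargement $\bar\L$. The mechanism is the nearest-neighbour structure combined with the even parity of the interaction: for any $X\in\carf(\L)$, the formal derivation $\d_\om(X)=i[H(\om),X]$ picks up only those terms of $H(\om)$ whose edge meets $\L$, so $\d_\om(X)=i[H_{\bar\L}(\om),X]\in\carf(\bar\L)$, with $\om$-dependence factoring through the boundary coupling constants $(J_{xy},h_{xy})_{(x,y)\in E\bar\L}$. Because the interaction is even, the graded commutant structure of $\carf(\bz^d)$ ensures that terms of $H(\om)$ localised strictly outside $\bar\L$ act trivially on $\carf(\L)$ through the derivation, and the same is true for each step of the iterative construction of the local flow within $\carf(\bar\L)$ via the corresponding Dyson series. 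The plan is then to set up the Dyson series for $\t^{\om}_t(A)$ against the \emph{local} generator $\d_\om\lceil_{\carf(\bar\L)}$ (which is bounded, since $\carf(\bar\L)$ is finite-dimensional and the couplings are essentially bounded), establish its norm-convergence and identification with the thermodynamic-limit flow from the second paragraph, and read off that the sum lies in $\carf(\bar\L)\otimes L^{\infty}(\Om,\m)$ with $\om$-dependence measurable through the finitely many $(J_{xy},h_{xy})_{(x,y)\in E\bar\L}$. The delicate verification is precisely the stability of $\carf(\bar\L)$ under the full derivation, where both the nearest-neighbour range and the evenness of $H$ are used in an essential way.
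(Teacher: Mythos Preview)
Your strategy is the paper's: the paper simply asserts $\t_t^\om(A)=e^{iH_{\bar\L}(\om)t}Ae^{-iH_{\bar\L}(\om)t}$ for $A\in\carf(\L)$ and then expands the matrix exponential, while you try to justify that same identity through the Dyson series for $\d_\om$.

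The gap sits exactly where you flag ``the main obstacle''. The claim that terms of $H(\om)$ localised outside $\bar\L$ act trivially, \emph{and that the same holds at every step}, does not survive the hopping part of \eqref{ffham}. It is true that $[H(\om),A]=[H_{\bar\L}(\om),A]\in\carf(\bar\L)$ for $A\in\carf(\L)$, because even interaction terms supported off $\L$ commute with $\carf(\L)$. But $H(\om)-H_{\bar\L}(\om)$ also contains hopping terms $J_{xy}c_x^\dagger c_y$ with $x\in\bar\L\setminus\L$ and $y\notin\bar\L$; these are even yet do \emph{not} commute with $\carf(\bar\L)$ (e.g.\ $[c_x^\dagger c_y,c_x]=-c_y$). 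Hence already at the second step $\d_\om$ carries $\carf(\bar\L)$ into $\carf(\bar{\bar\L})$, and the $n$-fold commutator spreads over the $n$-step enlargement of $\L$. So $\carf(\bar\L)$ is not $\d_\om$-invariant, the identification $\t_t^\om\lceil_{\carf(\L)}=\ad e^{iH_{\bar\L}(\om)t}$ fails, and for the free hopping part one has explicitly $\t_t^\om(c_0)=\sum_x(e^{-itJ(\om)})_{0x}c_x\notin\carf(\bar\L)$ for $t\neq0$. (Your argument \emph{would} go through for the pure density--density piece $\sum h_{xy}n_xn_y$, since the number operators picked up commute with $H$.) The paper's one-line proof makes the same unproven assertion; note, however, that the only thing used downstream (Propositions~\ref{prop73} and \ref{prop74}) is the weaker conclusion $f_{A,t}\in\carf(\bz^d)\otimes L^\infty(\Om,\m)$, and your first two paragraphs---the uniform-in-$\om$ convergence of the finite-volume dynamics via \cite{AM1}---already deliver that.
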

\begin{proof}
We have in this situation
$\t_{t}^{\om}(A)=e^{\imath H_{\bar\L}t}Ae^{-\imath H_{\bar\L}t}$. The proof follows by using the series expansion of the matrix $e^{\imath H_{\bar\L}t}$.
\end{proof}
Consider the map
$$
(t, \om)\in\br\times\Om\mapsto\t_{t}^{\om}\in\aut(\carf(\bz^d))\,,
$$
the last equipped with the $\s$--strong topology.\footnote{The two--sided uniform structure of the
$\s$--strong topology is generated by the countable family if semimetrics
\begin{equation}
\label{2un}
d_{\f}(\a,\b):=\|\f\circ\a-\f\circ\b\|+\|\f\circ\a^{-1}-\f\circ\b^{-1}\|
\end{equation}
where $\f$ runs on countable dense subset of $\cs(\ca)$.}
\begin{prop}
\label{prop73}
The one parameter group of random automorphism $\t_{t}^{\om}$ is jointly measurable in the variables $(t,\om)$.
\end{prop}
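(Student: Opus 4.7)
The plan is to reduce the statement to joint Borel measurability of the concrete orbit maps $(t,\om)\mapsto\t^\om_t(A)$ for $A\in\carf(\bz^d)$ with values in the norm topology, and then exploit the explicit finite-volume description of the dynamics supplied by Lemma~\ref{eehheh}. The reduction uses the fact that the $\s$-strong two-sided uniform structure on $\aut(\carf(\bz^d))$, described by the countable family of semimetrics \eqref{2un}, generates a Borel structure that is determined by the countable family of functions $\a\mapsto\f(\a(A))$ and $\a\mapsto\f(\a^{-1}(A))$ as $A$ runs through a countable norm-dense subset of $\carf(\bz^d)$ and $\f$ runs through a countable norm-dense subset of $\cs(\carf(\bz^d))$. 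Since each such function is norm-continuous in its argument, joint measurability of $(t,\om)\mapsto\t^\om_t$ as an $\aut$-valued map reduces to joint norm-measurability of $(t,\om)\mapsto\t_t^\om(A)$ together with $(t,\om)\mapsto\t_{-t}^\om(A)$, which are handled on the same footing.

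Next I would verify joint measurability of the orbit maps on local observables. For $A\in\carf(\L)$ with $\L$ finite, Lemma~\ref{eehheh} gives the explicit formula
\[
\t_t^\om(A)=e^{\imath H_{\bar\L}(\om)t}Ae^{-\imath H_{\bar\L}(\om)t},
\]
so everything happens inside the finite-dimensional $C^*$-algebra $\carf(\bar\L)$. From the form \eqref{ffham} of the Hamiltonian and the product structure \eqref{saspa} of $(\Om,\m)$, the random variables $J_{xy}$ and $h_{xy}$ are coordinate projections and thus Borel measurable, so $\om\mapsto H_{\bar\L}(\om)$ is Borel measurable into $\carf(\bar\L)$. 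Since the exponential map is jointly continuous on $\br\times\carf(\bar\L)$, the composition $(t,\om)\mapsto e^{\imath H_{\bar\L}(\om)t}$ is jointly Borel measurable, and hence so is $(t,\om)\mapsto\t_t^\om(A)$ into $\carf(\bar\L)\subset\carf(\bz^d)$ with the norm topology.

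The third step is to extend joint measurability of $(t,\om)\mapsto\t^\om_t(A)$ from local to arbitrary $A\in\carf(\bz^d)$. The local subalgebras are norm-dense in $\carf(\bz^d)$ and, because each $\t_t^\om$ is isometric, for any sequence $\{A_n\}$ of local observables with $A_n\to A$ in norm one has
\[
\|\t_t^\om(A)-\t_t^\om(A_n)\|\leq\|A-A_n\|\longrightarrow0
\]
uniformly in $(t,\om)$. Uniform limits of Borel functions are Borel, so $(t,\om)\mapsto\t_t^\om(A)$ is jointly Borel measurable for every $A$; the same argument applied to $\t_{-t}^\om$ covers the inverse. Combined with the reduction of the first paragraph, this yields the joint measurability of $(t,\om)\mapsto\t_t^\om$ into $\aut(\carf(\bz^d))$ equipped with the $\s$-strong topology.

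The only mildly delicate point is the reduction in the first paragraph: one has to be careful that the Borel structure on $\aut$ induced by the semimetrics \eqref{2un} really is generated by a countable family of test functionals, so that pointwise joint measurability on a dense subset of $\carf(\bz^d)$ suffices. Once this is granted, the proof is essentially a routine computation enabled by the finite-dimensionality of $\carf(\bar\L)$ and the product structure of the sample space.
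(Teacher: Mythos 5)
Your proof is correct and follows essentially the same route as the paper's: reduce, via the countable family of semimetrics \eqref{2un}, to joint measurability of the functions $(t,\om)\mapsto\f(\t_{t}^{\om}(A))$ for local $A$, and then exploit Lemma \ref{eehheh} and the explicit finite--volume formula $\t_{t}^{\om}(A)=e^{\imath H_{\bar\L}(\om)t}Ae^{-\imath H_{\bar\L}(\om)t}$. The only immaterial difference is that you invoke joint continuity of the exponential map on $\br\times\carf(\bar\L)$ where the paper expands $e^{\imath H_{\bar\L}t}$ in a power series with measurable terms, and you make explicit the density/isometry step for non--local $A$ that the paper leaves implicit.
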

\begin{proof}
By taking into account the semimetrics \eqref{2un} which generate
the $\s$--strong topology, it is enough to check if all the
functions $(t, \om)\mapsto\f(\t_{t}^{\om}(A))$ are jointly
measurable, when $A$ and  $\f$ run over $\bigcup\carf(\L)$ and
$\bigcup\carf(\L)^*$ respectively, where $\L$ are all the bounded
subregions of $\bz^d$. As in Lemma \ref{eehheh} by expanding
$e^{\imath H_{\bar\L}t}$ in a power series, the functions mentioned
above can be expressed as series whose the terms are measurable
functions.
\end{proof}
\begin{prop}
\label{prop74}
For each $A\in\carf(\bz^d)$,
$f_{A,t}\in\carf(\bz^d)\otimes L^{\infty}(\Om,\m)$. In particular,
$f_{A,t}\in L^{\infty}(\Om,\m;\carf(\bz^d))$
\end{prop}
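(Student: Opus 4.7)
The plan is to reduce the general case to the local case already handled by Lemma \ref{eehheh} via a norm-approximation argument, exploiting the fact that each $\t_t^\om$ is an isometry.

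First, recall that $\carf(\bz^d)$ is the $C^*$--inductive limit of the net of local algebras $\{\carf(\L)\}_{\L\text{ finite}}$, so the local elements form a norm--dense subset. For any local $A\in\carf(\L)$, Lemma \ref{eehheh} directly gives $f_{A,t}\in\carf(\bar\L)\otimes L^{\infty}(\Om,\m)\subset\carf(\bz^d)\otimes L^{\infty}(\Om,\m)$, so the conclusion holds on this dense subset.

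For a general $A\in\carf(\bz^d)$, I would fix a sequence $\{A_n\}$ of local elements with $\|A_n-A\|_{\carf(\bz^d)}\to 0$. Since every $\t_t^\om$ is a $*$--automorphism and hence isometric,
$$
\esssup_{\om\in\Om}\|f_{A_n,t}(\om)-f_{A,t}(\om)\|_{\carf(\bz^d)}
=\esssup_{\om\in\Om}\|\t_t^\om(A_n-A)\|_{\carf(\bz^d)}
=\|A_n-A\|_{\carf(\bz^d)}\longrightarrow 0,
$$
i.e.\ $f_{A_n,t}\to f_{A,t}$ in the uniform norm of $L^{\infty}(\Om,\m;\carf(\bz^d))$. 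By the previous paragraph each $f_{A_n,t}$ lies in $\carf(\bz^d)\otimes L^{\infty}(\Om,\m)$; this tensor product carries a unique $C^*$--cross norm (since $L^{\infty}(\Om,\m)$ is nuclear), under which it embeds isometrically as a norm--closed $C^*$--subalgebra of $L^{\infty}(\Om,\m;\carf(\bz^d))$. Hence the norm limit $f_{A,t}$ belongs to $\carf(\bz^d)\otimes L^{\infty}(\Om,\m)$ as well, and in particular to $L^{\infty}(\Om,\m;\carf(\bz^d))$.

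Finally, the measurability of $\om\mapsto f_{A,t}(\om)$ that is implicit in the statement is supplied by Proposition \ref{prop73}: at fixed $t$, joint $\s$--strong measurability of $(t,\om)\mapsto\t_t^\om$ gives $\s$--strong measurability of $\om\mapsto\t_t^\om(A)$, and essential boundedness is immediate from $\|\t_t^\om(A)\|=\|A\|$. The main (and essentially the only) conceptual point is the identification of $\carf(\bz^d)\otimes L^{\infty}(\Om,\m)$ as a norm--closed subspace of $L^{\infty}(\Om,\m;\carf(\bz^d))$; once this is in hand, the proof is a one--line density argument driven by the isometry of $\t_t^\om$. There is no delicate estimate to perform because the $\esssup$ estimate is sharp.
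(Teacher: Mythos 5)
Your argument is correct and is essentially identical to the paper's proof: both approximate $A$ in norm by localized elements, invoke Lemma \ref{eehheh} for those, use the isometry of $\t_t^\om$ to get uniform convergence of $f_{A_n,t}$ to $f_{A,t}$ in $L^{\infty}(\Om,\m;\carf(\bz^d))$, and conclude by the closedness of $\carf(\bz^d)\otimes L^{\infty}(\Om,\m)$ as a subalgebra. The extra remarks on measurability and the uniqueness of the cross norm are harmless elaborations of what the paper leaves implicit.
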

\begin{proof}
Let $\{A_n\}_{n\in\bn}\subset\bigcup_\L\carf(\L)$ be a sequence of localized elements converging to $A$. We get
$$
\|f_{A,t}-f_{A_n,t}\|_\infty\leq\|A-A_n\|\,.
$$
This means by Lemma \ref{eehheh}, that $f_{A,t}$ is uniform limit of measurable functions belonging to $\carf(\bz^d)\otimes L^{\infty}(\Om,\m)\subset L^{\infty}(\Om,\m;\carf(\bz^d))$. The proof follows as
$\carf(\bz^d)\otimes L^{\infty}(\Om,\m)$ is a closed subalgebra.
\end{proof}
The pivotal model considered above represents the disordered version
of the model considered for example in \cite{M2} (see also
\cite{M}), related to the investigation of the structure of the
ground states. Concerning  the temperature states, nothing is known
regarding the possible existence of the critical temperature(s),
even for the non disordered situation. Yet, it is possible to prove
some general properties concerning the structure of the the KMS
states. The reader is referred to \cite{AM1} for the non disordered
situation.\footnote{The reader is referred also to the papers
\cite{AFM, F1} for interesting connections which arise naturally
between the Markov property for Fermions and the KMS condition and
entanglement.}  By taking into account the propositions
\ref{prop71}, \ref{prop73}, \ref{prop74}, we can apply all the
results of the present paper to the disordered model based on
$\ga:=\carf(\bz^d)\otimes L^{\infty}(\Om,\m)$ and the Hamiltonian
\eqref{ffham}, where $(\Om,\m)$ is described by \eqref{saspa}. We
refer the reader to Section 5 of \cite{BF} for the proofs and
details.

It is a well known fact (i.e. a standard compactness trick, see e.g.
\cite{BR2}) that, for a fixed realization of the couplings
$\{J_{x,y}\}$, and the external magnetic field $\{h_{x,y}\}$, the
spin algebra $\carf(\bz^d)$ admits KMS states at each inverse
temperature $\b>0$. We start by considering in some detail the
uniqueness case. Such a situation arises if the quantum
model under consideration admits some critical temperature. The
situation is well clarified for many classical disordered models
(see e.g. \cite{Ne}), contrary to the quantum situation where, to
the knowledge of the authors, there are very few rigorous results
concerning this point, even for the standard model of quantum spin
glasses where the observables are modeled by the usual tensor
product of infinitely many copies of a full matrix algebra. Namely,
suppose that for a fixed $\b>0$, the Ising type model under
consideration admits a unique KMS state, say $\f_{\om}$, almost
surely. By the same arguments used in \cite{BF} we can show, thanks
to Proposition \ref{001ev}, that the map $\om\in\Om\mapsto
\f_{\om}\in\cs(\carf(\bz^d))$ is $*$--weak measurable and made of
even states almost surely. Furthermore, it satisfies almost surely
the condition of equivariance
\begin{equation}
\label{eqiv} 
\f_{\om}\circ\a_{x}=\f_{T_{-x}\om}
\end{equation}
w.r.t. the spatial translations, simultaneously. Namely, it defines by \eqref{a2} a state
$\f\in\cs_{NI}(\ga)$. Suppose now that $\psi$ is any KMS state at the inverse temperature $\b$, normal when restricted to $L^{\infty}(\Om,\m)$. Then, according to \eqref{a2}
$$
\psi=\int_\Om\psi_\om\di\m(\om)
$$
for a $*$--weak measurable field $\om\in\Om\mapsto \psi_{\om}\in\cs(\carf(\bz^d))$ of positive form. By Proposition \ref{001} and the uniqueness assumption, we get
$$
\psi_\om=\psi_\om(\idd)\f_\om\,,
$$
almost surely. We have then shown that there exists a one--to--one correspondence
$f\mapsto\f_f$ between positive normalized $L^{1}$--functions and KMS states for $\ga$ at inverse temperature $\b>0$. Namely,
\begin{equation}
\label{0001}
\f(A)=\int_{\Om}f(\om)\f_{\om}(A(\om))\di\m(\om)\,, \qquad
A\in\ga\,,
\end{equation}
where $f\in L^{1}(\Om,\m)$ is any positive normalized function. In a situation such as the one just described above, there is a
unique locally normal KMS state $\f$ on $\ga$ which is translation
invariant, which correspond to $f=1$ in \eqref{0001}, in addition, any KMS states is automatically even. Namely, $\f_f\in\cs_{N}(\ga)$. Finally,
there exists a unique $\l>0$ such that $\f_f$ is
a direct integral of $\ty{III_{\l}}$ factors almost surely.

We end the section by briefly describing what happens in ``multiple
phase'' regime, provided such a possibility exists for the model
under consideration. After taking the infinite--volume limit along
various subsequences $\L_{n_{k}}\uparrow\bz^{d}$, we will find, in
general, different locally normal translation invariant $\gpt$--KMS
states on $\ga$ at fixed inverse temperature $\b$, which are
automatically even. Fix one such a state $\f$. Then, one recovers a
$*$--weak measurable field
$\{\f_{\om}\}_{\om\in\Om}\subset\cs(\carf(\bz^d))$ of even
$\t^{\om}$--KMS states satisfying the equivariance property
\eqref{eqiv}. According to Proposition 3.1 of \cite{BF} (cf.
\cite{BR2}, Proposition 5.3.29), the set of the $\gpt$--KMS states
$\f_{T}\in\cs(\ga)$, locally normal w.r.t. $\f$, has the form
\begin{equation}
\label{000121}
\f_{T}(A)=\int_{\Om}
\big\langle\pi_{\f_{\om}}(A(\om))T(\om)^{1/2}\Psi_{\f_{\om}},
T(\om)^{1/2}\Psi_{\f_{\om}}\big\rangle_{\ch_{\f_{\om}}}\di\m(\om)\,.
\end{equation}
Here, $(\pi_{\f_{\om}},\ch_{\f_{\om}},\Psi_{\f_{\om}})$ is the GNS
representation of $\f_{\om}$, $\{T(\om)\}_{\om\in\Om}$ is a
measurable field of closed densely defined operators on
$\ch_{\f_{\om}}$ affiliated to the (isomorphic) centres
$\gz_{\f_{\om}}$ respectively, satisfying
$\Psi_{\f_{\om}}\in\cd_{T(\om)^{1/2}}$ almost surely, and
${\displaystyle
\int_{\Om}\|T(\om)^{1/2}\Psi_{\f_{\om}}\|_{\ch_{\f_{\om}}}^{2}\di\m(\om)=1}$. This means that
$\f_T$ is the direct integral of
$$
\f_{T(\om)}:=\big\langle\pi_{\f_{\om}}(A(\om))T(\om)^{1/2}\Psi_{\f_{\om}},
T(\om)^{1/2}\Psi_{\f_{\om}}\big\rangle_{\ch_{\f_{\om}}}\,.
$$
For physical application (cf. \cite{A2}), we specialize the situation when $\f_T$ is even. Again by
Proposition \ref{001ev}, this means that $\f_{T(\om)}$ is even, almost surely. It might be proven that it implies that $T$ is even, and then $T(\om)$ is even almost surely. In order to avoid technicalities due to the unboundedness of $T$ we prove the statement in the bounded case.
\begin{prop}
Suppose that the positive operator $T\eta\gz_{\f}$, describing the
even KMS state $\f_{T}$, which is normal w.r.t. $\f$, is bounded.
Then $T(\om)$ in \eqref{000121} is even, almost surely.
\end{prop}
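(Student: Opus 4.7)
The plan is first to show that $T\in\gz_\f$ is even for the grading implemented on $M:=\pi_\f(\ga)''$, and then to transfer this property fiberwise to the components $T(\om)$.

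Since $\f$ is $\gpa$--invariant it is automatically even by Proposition \ref{folk}, so the grading $\gps=\s\otimes\id_{L^{\infty}(\Om,\m)}$ is implemented on $\ch_\f$ by a unitary $V$ with $V\F=\F$ and $\pi_\f\circ\gps=\ad V\circ\pi_\f$; set $\tilde\gps:=\ad V$. Because $T\in\gz_\f$ is bounded, $T^{1/2}$ lies in $\gz_\f$ too and hence commutes with $\pi_\f(\ga)$; a short manipulation using $V\F=\F$ then yields, for every $A\in\ga$,
\begin{align*}
\f_T(A)=\langle T\pi_\f(A)\F,\F\rangle\,,\qquad
\f_T(\gps(A))=\langle V^{*}TV\,\pi_\f(A)\F,\F\rangle\,.
\end{align*}
The hypothesis $\f_T\circ\gps=\f_T$ then reads $\langle(V^{*}TV-T)\pi_\f(A)\F,\F\rangle=0$ for every $A\in\ga$; since $V^{*}TV-T\in\gz_\f$ commutes with $\pi_\f(\ga)$ and $\pi_\f(\ga)\F$ is dense in $\ch_\f$, this forces $(V^{*}TV-T)\F=0$. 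As $\f$ is KMS, $\F$ is separating for $M$, whence $\tilde\gps(T)=V^{*}TV=T$.

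For the passage to the fibers, note that $\gps$ acts trivially on $L^{\infty}(\Om,\m)$ and $\f$ is even, so $V$ commutes with $\pi_\f(L^{\infty}(\Om,\m))$ and is therefore decomposable along the subcentral direct integral \eqref{a1}, $V=\int^{\oplus}_{\Om}V_\om\,\m(\di\om)$, with $V_\om$ unitary almost surely. Applying the intertwining $V\pi_\f(\,\cdot\,)V^{*}=\pi_\f(\gps(\,\cdot\,))$ to a countable dense subset of $\ca$, together with formula \eqref{reaa}, forces $V_\om$ to implement $\s$ on $\pi_{\f_\om}(\ca)''$ almost everywhere, compatibly with the almost sure evenness of $\f_\om$ established in Proposition \ref{001ev}. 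Since $T\in\gz_\f$ decomposes as $T=\int^{\oplus}_{\Om}T(\om)\,\m(\di\om)$ with $T(\om)\in\gz_{\f_\om}$ almost surely, and this $T(\om)$ coincides with the one appearing in \eqref{000121} by uniqueness of the decomposition, the identity $V^{*}TV=T$ becomes $V_\om^{*}T(\om)V_\om=T(\om)$ almost surely, which is the required evenness.

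The most delicate point is the fiberwise identification $V=\int^{\oplus}_{\Om}V_\om\,\m(\di\om)$ together with the verification that each $V_\om$ implements the parity on $\pi_{\f_\om}(\ca)''$. The first is granted by the commutation of $V$ with the diagonal algebra $\pi_\f(L^{\infty}(\Om,\m))$; the second is a routine measurability argument along a countable dense subset of $\ca$, exploiting the fact that the parity automorphism is already encoded in the evenness of $\f_\om$ almost surely.
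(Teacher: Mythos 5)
Your argument is correct, and its core step is the same as the paper's: deducing $V^{*}TV=T$ from the evenness of $\f$ and of $\f_{T}$, using that $T\in\gz_\f$ commutes with $\pi_\f(\ga)$, that $\pi_\f(\ga)\F$ is dense, and that $\F$ is separating. Where you genuinely diverge is in how the conclusion reaches the fibers. The paper proves this identity as an abstract lemma for a single $\bz_2$--graded $C^*$--dynamical system carrying two even KMS states (one normal with respect to the other), and then applies the lemma \emph{fiberwise}: by Proposition \ref{001ev} the forms $\f_{T(\om)}$ are even almost surely, the $\f_\om$ are even almost surely, and the $T(\om)$ are bounded almost surely, so the lemma gives the evenness of each $T(\om)$ directly in its own GNS representation. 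You instead run the computation once, globally on $\ga$, and then push the identity $V^{*}TV=T$ down through the subcentral direct integral by decomposing the implementing unitary $V=\int^{\oplus}_{\Om}V_\om\,\m(\di\om)$ (legitimate, since $V$ commutes with the diagonal algebra $\pi_\f(L^{\infty}(\Om,\m))$), verifying that the $V_\om$ implement the parity fiberwise and that the components of $T$ agree almost everywhere with the $T(\om)$ of \eqref{000121} by uniqueness of the Radon--Nikodym operator. Both routes are sound: the paper's is leaner in that it never decomposes $V$, but it relies on Proposition \ref{001ev} to know $\f_{T(\om)}$ is even almost surely; yours trades that for the (routine, but worth stating carefully) measurability bookkeeping of identifying $V_\om$, $\gz_{\f_\om}$ and $T(\om)$ along the direct integral.
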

\begin{proof}
Suppose $(\cb,\a)$ is a dynamical system, where $\cb$ is a
$\bz_2$--graded $C^*$--algebra and $\a$ is one parameter group of
even automorphisms of $\gb$. Let $\f$, $\psi$ be even $\a$--KMS
states of $\gb$ with $\psi$ normal w.r.t. $\f$. Consider the joint
covariant (relative to the time evolution $\a_t$, and the grading
$\s$) GNS representation $(\ch,\pi_\f,U_t,V,\F)$ of $\f$. By
\cite{BR2}, Proposition 5.3.29 there exists a unique positive
$T\eta\gz_{\f}$, with $\F\in\cd_{T^{1/2}}$ such that
$$
\psi(A)=\langle\pi_\f(A)T^{1/2}\F,T^{1/2}\F\rangle\,.
$$
Suppose now that $T$ is bounded.\footnote{The proof can be generalized to the unbounded situation, approximating $T$ by a sequence of bounded positive operators in $\gz_{\f}$. We leave the details to the reader.}. As $\f$ and $\psi$ are even, we get for each $A\in\gb$.
\begin{align*}
\langle\pi_\f(A)\F,T\F\rangle=&\psi(A)=\psi(\s(A))=\langle V\pi_\f(A)V\F,T\F\rangle\\
=&\langle \pi_\f(A)\F,VT\F\rangle=\langle \pi_\f(A)\F,VTV\F\rangle\,.
\end{align*}
By the cyclicity  of $\F$ we get $T\F=VTV\F$. As $\F$ is separating for $\gz_{\f}$ we conclude that
$T=VTV$, that is $T$ is even.
In our situation, the fact that $\f_T$ is supposed to be even and $T$ bounded, implies that the
$\f_{T(\om)}$
are even and the $T(\om)$ are bounded, almost surely.
The proof follows by applying the previous consideration, fiberwise to $\f_{T(\om)}$.
\end{proof}

\section*{Acknowledgements}
The second--named author (F.F.) would like to thank the Indian NBHM, Department of Atomic Energy, for the Visiting
Professorship, which provided him with an opportunity to begin work
on this problem while visiting Padre Conceicao  College of
Engineering, Verna, Goa. He acknowledges also the partial support of Italian INDAM--GNAMPA.
The author is grateful for the
warm hospitality during his stay at Padre Conceicao
College of Engineering, to 
the Indian Statistical Institute,
Bangalore, and the Institute of Mathematical Sciences, Chennai.

\end{document}